\newcommand{\cP}{\mathcal{P}}
\newcommand{\cZ}{\mathcal{Z}}
\newcommand{\cE}{\mathcal{E}}
\newcommand{\cF}{\mathcal{F}}
\newcommand{\cB}{\mathcal{B}}
\newcommand{\R}{\mathbb{R}}
\newtheorem{theorem}{Theorem}
\newtheorem{lemma}[theorem]{Lemma}
\newtheorem{proposition}[theorem]{Proposition}
\theoremstyle{definition}
\newtheorem{remark}[theorem]{Remark}
\newtheorem{definition}[theorem]{Definition}
\newcommand\term[1]{{\em #1\/}}
\newcommand\bterm[1]{{\bf #1\/}}
\begin{document}

\title[Nodal deficiency of billiard eigenfunctions]{Critical partitions and nodal deficiency of billiard eigenfunctions}

\author{Gregory Berkolaiko$^{1}$, Peter Kuchment$^{1}$, and Uzy Smilansky$^{2,3}$}
\address{$^{1}$ Department of Mathematics, Texas A\&M University, College Station, TX 77843-3368 USA.}
\address{$^{2}$ Department of Physics of Complex Systems,
Weizmann Institute of Science, Rehovot 76100, Israel.}
\address{$^{3}$ School of Mathematics, Cardiff University, Cardiff,
Wales, UK}

\date{}
\maketitle

\begin{abstract}
  The paper addresses the nodal count (i.e., the number of nodal
  domains) for eigenfunctions of Schr\"{o}dinger operators with
  Dirichlet boundary conditions in bounded domains. The classical
  Sturm theorem states that in dimension one, the nodal and
  eigenfunction counts coincide: the $n$-th eigenfunction partitions
  the interval into $n$ nodal domains. The Courant Nodal Theorem
  claims that in any dimension, the number of nodal domains $\nu_n$ of
  the $n$th eigenfunction cannot exceed $n$. However, it follows from
  an asymptotically stronger upper bound by Pleijel that in dimensions
  higher than $1$ the equality can hold for only finitely many
  eigenfunctions. Thus, in most cases a ``nodal deficiency''
  $d_n=n-\nu_n$ arises.  One can say that the nature of the nodal
  deficiency has not been understood.

  It was suggested in recent years that, rather than starting with
  eigenfunctions, one can look at partitions of the domain into
  $\nu$ sub-domains, asking which partitions can correspond to
  eigenfunctions, and what would be the corresponding deficiency.  To
  this end one defines an ``energy'' of a partition, for example, the
  maximum of the ground state energies of the sub-domains.  One notices
  that if a partition does correspond to an eigenfunction, then the
  ground state energies of all the nodal domains are the same, i.e.,
  it is an equipartition.  It was shown in a recent paper by Helffer,
  Hoffmann-Ostenhof and Terracini that (under some natural conditions)
  partitions minimizing the energy functional correspond to the
  ``Courant sharp'' eigenfunctions, i.e. to those with zero nodal
  deficiency.

  In this paper it is shown that it is beneficial to restrict the
  domain of the functional to the equipartition, where it becomes
  smooth.  Then, under some genericity conditions, the nodal
  partitions correspond exactly to the critical points of the
  functional.  Moreover, the nodal deficiency turns out to
  be equal to the Morse index at the corresponding critical
  point. This explains, in particular, why the minimal partitions
  must be Courant sharp.
\end{abstract}

%%%%%%%%%%%%%%%%
\section{Introduction}
%%%%%%%%%%%%%%%%

We consider the Schr\"{o}dinger operator
\begin{equation}
 H=-\Delta + V(x)
\end{equation}
with Dirichlet conditions in a connected bounded domain $\Omega\subset\R^d$. We
will assume that the domain has a smooth boundary and that the real
potential $V$ is also smooth on $\overline{\Omega}$.  While these assumptions are overly
restrictive, we do not want to burden our considerations with less
significant details. See the final section for additional remarks.

The operator $H$ can be defined via its quadratic form
$$
h[u,u]=\int\limits_\Omega |\nabla u(x)|^2 dx + \int\limits_\Omega V(x)|u(x)|^2 dx
$$
with the domain $H^1_0 (\Omega)$. Thus defined, it is self-adjoint in $L_2(\Omega)$ and has real discrete spectrum of finite multiplicity
$$
\lambda_1<\lambda_2\leq \lambda_3\leq \dots,
$$
where $\lim\limits_{n\to\infty}\lambda_n=\infty.$ It has an orthonormal basis of real-valued eigenfunctions $\psi_n$ such that $\psi_1(x)> 0$. We will sometimes use the notations $H(\Omega)$ and $\lambda_j(\Omega)$, when we need to emphasize the dependence of the operator and its spectrum on the domain.

\begin{definition}
For a function $f(x)$ we will be interested in its \textbf{nodal (zero) set}
$$
\cZ(f):=f^{-1}(0)=\overline{\{x\in\Omega\mid f(x)=0\}}.
$$
The complement $\Omega\setminus N(f)$ is the union of connected open sub-domains $D_1,\dots,D_\nu$ of $\Omega$, which we will call \textbf{nodal domains}. The nodal domains form the \textbf{nodal partition}  $P(f)=\{D_j\}$ corresponding to the function.
\end{definition}

We will be mostly interested in the case when $f(x)$ is an
eigenfunction $\psi_n$, and thus in its nodal set $\cZ(\psi_n)$, its
nodal domains $D_1,\dots,D_\nu$ of $\Omega$, and its \term{nodal
  partition} $P(\psi_n)=\{D_j\}$.

A lot of attention has been paid to the nodal structure of eigenfunctions (e.g., \cite{GoodVibr,Hal_McL96,NadTotYak_umn01,DonFef_ancet90,DonFef_apde90,DonFef_jams90,DonFef_jga92,DonFef_inv88} and references therein), and in particular to the number $\nu$ (or $\nu_{\psi_n}$, if one wants to emphasize dependence on the eigenfunction) of nodal domains of the $n$-th eigenfunction $\psi_n$ of $H$.

In spite of more than 300 years history of this topic\footnote{Robert
  Hooke observed on 8 July 1680 the nodal patterns on vibrating glass
  plates, running a bow along the edge of a glass plate covered with
  flour \cite{Hooke}. A hundred years later, the same effect was
  systematically studied by E. Chladni \cite{Chladni}. In fact, such
  patterns were known to Leonardo da Vinci \cite{davinci} and Galileo Galilei \cite{Galileo}. See also some
  historical discussion in \cite{GoodVibr}.}, open questions still
abound. We will discuss here one of them, the issue of the so called
nodal deficiency. The classical Sturm theorem states that in
dimension one, the nodal and eigenfunction counts coincide:
$\nu_{\psi_n}=n$.  The Courant Nodal Theorem \cite[Vol. I, Sec. V.5,
VI.6]{CouHil_book53} asserts that in any dimension, the upper bound on
the number of nodal domains still holds:
$$
\nu_{\psi_n}\leq n.
$$
While $\nu_{\psi_1}=1$ and $\nu_{\psi_2}=2$, it follows from an
asymptotically stronger upper bound for $\nu_{\psi_n}$ by Pleijel
\cite{Ple_cpa56} that in higher dimensions the equality
$\nu_{\psi_n}=n$ can hold only for a finitely many values of $n$.
Moreover, there are known example of eigenfunctions $\psi_n$ with
arbitrarily large index $n$ that have just two nodal domains.

The eigenfunctions $\psi_n$ for which $\nu_{\psi_n}=n$, are sometimes
called \term{Courant sharp}. The non-negative difference
$$
d_n:=n-\nu_{\psi_n}
$$
is said to be the \textbf{nodal deficiency} of an eigenfunction.  It is
believed that the integer sequence $\{d_n\}$ contains much information
about the geometry of the domain (see, for example,
\cite{BluGnuSmi_prl02,KarSmi_jpa08,Kla_jpa09,GnuSmiSon_jpa05,BruKlaPuh_jpa07}).
However, we are not aware of any interpretation of the meaning of
individual nodal deficiencies $d_n$.  It is the goal of this text to
present one such interpretation.

An important new approach has been developed in the last several years
in the series of papers
\cite{ConTerVer_cvp05,Hel_sem07,HelHofTer_aip09}. Namely, instead of
concentrating on an eigenfunction, one can look at a \term{partition}
of $\Omega$ by connected open domains $\{D_j\}_{j=1}^\nu$ and try to
determine whether a given partition can be the nodal partition of an
eigenfunction of a given operator $H$, and if yes, what could be the
corresponding nodal deficiency.

Some necessary conditions on the partition are not hard to find. Indeed,
one can introduce the \textbf{graph of the partition}, with each
partition domain $D_j$ serving as a vertex and any two nodal domains
that have a $(d-1)$-dimensional common boundary being
connected by an edge.  Since, due to standard uniqueness theorems for
the Cauchy problem, the eigenfunction must change its sign when
crossing any $(d-1)$-dimensional piece of the boundary of two adjacent
nodal domains, we conclude that the following well known property
holds:

\begin{proposition}
  The graph of the nodal partition corresponding to an eigenfunction
  is bipartite.
\end{proposition}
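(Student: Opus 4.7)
My plan is to construct an explicit 2-coloring of the vertices of the partition graph and then verify that it places adjacent vertices in opposite classes, which is exactly the bipartiteness condition.

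Concretely, fix an eigenfunction $\psi_n$ and its nodal partition $P(\psi_n)=\{D_j\}_{j=1}^{\nu}$. On each nodal domain $D_j$, the function $\psi_n$ has a definite sign by the very definition of a nodal domain (it is a connected component of $\{\psi_n\neq 0\}$). I would therefore define a map $\sigma\colon\{D_1,\dots,D_\nu\}\to\{+1,-1\}$ by $\sigma(D_j):=\operatorname{sign}(\psi_n|_{D_j})$. This gives the candidate 2-coloring.

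Next I have to show that if $D_i$ and $D_j$ share an edge in the partition graph, then $\sigma(D_i)\neq\sigma(D_j)$. By the definition of the graph, adjacency means that $\overline{D_i}\cap\overline{D_j}$ contains a smooth $(d-1)$-dimensional piece $\Gamma$ of the nodal set $\cZ(\psi_n)$. I would argue by contradiction: suppose $\psi_n$ had the same sign on $D_i$ and $D_j$. Then, after multiplying by $\pm 1$, the function $\psi_n$ is non-negative in a full neighborhood of an interior point $x_0$ of $\Gamma$, and it attains the value $0$ there. Since $x_0$ is an interior minimum in $\Omega$, the gradient $\nabla\psi_n(x_0)$ vanishes, so both $\psi_n$ and its normal derivative along $\Gamma$ vanish at $x_0$. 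The Cauchy uniqueness theorem for the elliptic operator $-\Delta+V-\lambda_n$ (invoked by the authors in the paragraph preceding the proposition) then forces $\psi_n\equiv 0$ in $\Omega$, contradicting the fact that $\psi_n$ is an eigenfunction.

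Hence $\sigma$ is a proper $2$-coloring of the partition graph, which is the definition of bipartiteness. The only non-routine ingredient is the sign-change step, and there the main subtlety is justifying that the shared boundary $\Gamma$ does contain a point where $\psi_n$ is smooth and where one may apply the Cauchy uniqueness theorem with well-defined normal direction; this is where the smoothness assumptions on $\partial\Omega$ and $V$, together with the elliptic regularity of $\psi_n$ in the interior of $\Omega$ and the fact that $\Gamma$ is, by assumption, a genuine $(d-1)$-dimensional piece of the nodal set, come into play. No other delicate issue arises, so I expect the full proof to be short once this unique-continuation step is stated carefully.
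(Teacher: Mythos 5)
Your proof is correct and follows essentially the same route as the paper, which justifies the proposition in the sentence immediately preceding it: the eigenfunction must change sign across any $(d-1)$-dimensional piece of common boundary by the uniqueness theorem for the Cauchy problem, so the sign of $\psi_n$ on each nodal domain is a proper $2$-coloring. One small point of care: to invoke Cauchy uniqueness you need the Cauchy data of $\psi_n$ to vanish on an open piece of $\Gamma$, not merely at the single point $x_0$, but your interior-minimum argument applies at every interior point of $\Gamma$ near $x_0$, so this is immediate.
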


Another important simple observation is:
\begin{proposition}
  \label{P:ground_states}
  If $\{D_j\}_{j=1}^\nu$ is the nodal partition corresponding to the
  eigenfunction $\psi_n$ with eigenvalue $\lambda_n$, then for
  each nodal domain $D_j$, one has
  $$
  \lambda_1(D_j)=\lambda_n(\Omega).
  $$
  (Here, as before, $\lambda_k(D)$ denotes the $k$th Dirichlet
  eigenvalue in the domain $D$.)
\end{proposition}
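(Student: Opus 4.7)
The plan is to show that the restriction of $\psi_n$ to any nodal domain $D_j$ is essentially the Dirichlet ground state of $H(D_j)$, so that its eigenvalue $\lambda_n$ must be $\lambda_1(D_j)$.

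First I would fix a nodal domain $D_j$ and consider the function
$$
\tilde{\psi}(x) := \begin{cases} \psi_n(x), & x\in D_j,\\ 0, & x\in\Omega\setminus D_j.\end{cases}
$$
By standard elliptic regularity, $\psi_n$ is smooth in $\Omega$ and continuous up to $\overline{\Omega}$, and it vanishes on the nodal set $\cZ(\psi_n)\supset \partial D_j\cap \Omega$ and on $\partial\Omega$ by the Dirichlet condition. Hence $\tilde\psi$ is continuous on $\overline{\Omega}$ and vanishes on $\partial D_j$. The first step is to verify that $\tilde\psi\in H^1_0(D_j)$. This is the main technical point, since $\partial D_j$ can have singular points (nodal crossings); one handles it by multiplying $\tilde\psi$ by a family of Lipschitz cutoffs supported away from $\partial D_j$ and taking an $H^1$-limit, using the fact that $\nabla\psi_n$ is bounded on a neighborhood of $\overline{D_j}$ and that the $(d-1)$-dimensional measure of $\cZ(\psi_n)$ is finite.

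Next, for any test function $\varphi\in C^\infty_c(D_j)$, regarded (by zero extension) as a test function on $\Omega$, the eigenvalue equation $H\psi_n=\lambda_n\psi_n$ on $\Omega$ gives
$$
\int_{D_j}\nabla\tilde\psi\cdot\nabla\varphi\, dx + \int_{D_j}V\tilde\psi\,\varphi\, dx = \lambda_n\int_{D_j}\tilde\psi\,\varphi\, dx.
$$
Thus $\tilde\psi$ is a weak $H^1_0$-eigenfunction of $H(D_j)$ with eigenvalue $\lambda_n$, and since $\tilde\psi\not\equiv 0$, the number $\lambda_n$ is in the Dirichlet spectrum of $D_j$.

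Finally, by the definition of a nodal domain, $\psi_n$ has constant sign on $D_j$; without loss of generality $\tilde\psi>0$ in $D_j$. The Dirichlet ground state of a Schr\"odinger operator on a connected domain is, up to scalar multiple, the unique eigenfunction of constant sign (and conversely, any strictly positive eigenfunction must be the ground state). Therefore $\tilde\psi$ is proportional to the ground state of $H(D_j)$, and its eigenvalue satisfies
$$
\lambda_n = \lambda_1(D_j),
$$
as claimed. The only nontrivial obstacle is the regularity argument that places $\tilde\psi$ in $H^1_0(D_j)$; everything else is a direct consequence of the eigenvalue equation and the Perron-type uniqueness of the positive eigenfunction.
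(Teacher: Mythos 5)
Your argument is correct and is exactly the route the paper takes: the paper's one-line proof simply observes that $\psi_n$ has constant sign on $D_j$ and is therefore proportional to the ground state of $H(D_j)$, which is precisely your final step. The additional work you do (zero extension, membership in $H^1_0(D_j)$, the weak eigenvalue equation) is a careful justification of the standard facts the paper leaves implicit, not a different approach.
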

Indeed, by definition, $\psi_n$ does not change sign in $D_j$ and thus
is proportional to the groundstate for $H(D_j)$.

This observation leads to the following notion that plays a crucial role in what follows:

\begin{definition}
A partition $P=\{D_j\}_{j=1}^\nu$ is said to be an {\bf equipartition}, if the lowest eigenvalues of all operators $H(D_j)$ are the same, i.e.
\begin{equation}\label{E:equipartition}
   \lambda_1(D_1)=\lambda_1(D_2)=\dots=\lambda_1(D_\nu).
\end{equation}
\end{definition}
As we have already mentioned, every nodal partition (i.e., the partition corresponding to an eigenfunction) is an equipartition.

This observation has lead to the following construction: given a natural number $\nu$, consider the ``space'' of ``arbitrary'' (with some natural restrictions) \textbf{$\nu$-partitions} $\{D_j\}$, i.e. partitions with $\nu$ sub-domains. Let us also introduce the functional
\begin{equation}
  \label{eq:Lambda_def}
  \Lambda(\{D_j\}):=\max\limits_{j=1,\dots,\nu}\lambda_1(D_j)
\end{equation}
on this space.

One can look now at the \textbf{minimal partitions}, i.e.~the partitions
that minimize the functional $\Lambda(\{D_j\})$ for a given $\nu$.
% $$
% \min\limits_{\nu-\mbox{partitions} \{D_j\}_{j=1}^\nu}\Lambda(\{D_j\}).
% $$
Such minimal partitions are known to exist \cite{ConTerVer_cvp05} and the following important result holds:

\begin{theorem}[Helffer, Hoffmann-Ostenhof, Terracini \cite{HelHofTer_aip09}]
  Minimal bipartite partitions are exactly the nodal partitions of
  Courant sharp eigenfunctions.
\end{theorem}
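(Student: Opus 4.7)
The plan is to prove the equivalence in two directions, and the easier direction sets up machinery that will illuminate what is needed for the harder one. For the ``$\Leftarrow$'' direction (Courant sharp $\psi_n$ gives a minimal bipartite partition): the nodal partition $P=P(\psi_n)=\{D_j\}_{j=1}^n$ is bipartite by the earlier proposition. By Proposition \ref{P:ground_states} we have $\Lambda(P)=\lambda_n$. For any competing $n$-partition $P'=\{D'_j\}$, the $L^2$-normalized positive ground states $\phi'_j$ of $H(D'_j)$ (extended by zero) form an orthogonal family in $H^1_0(\Omega)$ spanning an $n$-dimensional subspace $W$. Because the $\phi'_j$ have pairwise disjoint support, the form $h$ is diagonal on $W$ with entries $\lambda_1(D'_j)$, so $h(w,w)\leq\Lambda(P')\|w\|_{L^2}^2$ on $W$. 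The Courant--Fischer min-max principle then gives $\lambda_n\leq\Lambda(P')$, so $\Lambda(P)=\lambda_n\leq\Lambda(P')$, proving minimality (even among all $n$-partitions, not merely bipartite ones).

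For the harder direction ``$\Rightarrow$'' I would proceed in three steps. Step A (equipartition): if $P=\{D_j\}_{j=1}^\nu$ is minimal, then all $\lambda_1(D_j)$ coincide. Otherwise, Hadamard's shape-derivative formula $\delta\lambda_1(D)=-\int_{\partial D}(\partial_n\phi)^2\,V_n\,dS$ shows that pushing the common boundary between a domain with small $\lambda_1$ and a neighbor achieving the maximum strictly decreases $\Lambda$, contradicting minimality. Call the common value $\lambda$. Step B (candidate eigenfunction): using bipartiteness, assign signs $\epsilon_j\in\{+1,-1\}$ to the two color classes, pick positive ground states $\phi_j>0$ of $H(D_j)$, and set $\psi:=\sum_j\epsilon_j\phi_j$ on $\Omega$.

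Step C (eigenfunction property): the main technical step. Inside each $D_j$, $H\psi=\lambda\psi$ trivially. Since $\psi$ vanishes on the nodal set, it is continuous, and what remains is that the outward normal derivatives of neighboring $\epsilon_j\phi_j$ and $\epsilon_k\phi_k$ cancel across each shared interface so that no distributional singularity occurs in $H\psi$. One extracts this from the first-order optimality condition: if we perturb the interface by a smooth normal velocity $V_n$, Hadamard's formula gives simultaneous variations $\delta\lambda_1(D_j)=-\int(\partial_n\phi_j)^2 V_n\,dS$ and the negative for $D_k$, and minimality (combined with the equipartition $\lambda_1(D_j)=\lambda_1(D_k)$) forces $(\partial_n\phi_j)^2=(\partial_n\phi_k)^2$ almost everywhere on the interface. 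Rescaling each $\phi_j$ consistently along the bipartite graph yields $|\partial_n\phi_j|=|\partial_n\phi_k|$, and bipartiteness guarantees the signs from $\epsilon_j$ align so that $\partial_n\psi$ is continuous. Hence $\psi\in H^1_0(\Omega)$ and $H\psi=\lambda\psi$ in $L^2$; the $D_j$'s are its nodal domains (since $\phi_j>0$ inside $D_j$), so $\nu_\psi=\nu$.

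Step D (Courant sharpness): $\psi$ is an eigenfunction with eigenvalue $\lambda=\lambda_n$ for some $n$, and the min-max bound from Part~I gives $\lambda_\nu\leq\lambda$. It remains to rule out $\lambda>\lambda_\nu$. This is the main obstacle, and where the delicate part of Helffer--Hoffmann-Ostenhof--Terracini enters. If $\lambda>\lambda_\nu$ then $\mathrm{span}(\psi_1,\dots,\psi_\nu)$ is a $\nu$-dimensional subspace with maximal Rayleigh quotient $\lambda_\nu<\lambda$; the task is to convert this subspace into an admissible bipartite $\nu$-partition $P'$ with $\Lambda(P')<\lambda$, contradicting minimality. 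Carrying this out requires the regularity theory of minimal partitions (smoothness of the interface away from an $(d-2)$-dimensional singular set, with controlled local structure) together with a perturbation/slicing argument that respects bipartiteness. Granting this, $\lambda=\lambda_\nu$, and $\psi$ (re-indexed within its eigenspace if necessary) becomes a Courant-sharp eigenfunction of index $\nu$, whose nodal partition is precisely $P$. I expect Step~D to be the main obstacle; Step~C, while technical, is a direct consequence of first-order optimality once Step~A is in place.
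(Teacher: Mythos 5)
First, a point of order: the paper does not prove this theorem at all --- it is imported verbatim from \cite{HelHofTer_aip09}, so there is no internal proof to compare yours against. The closest the paper comes is Theorems \ref{T:critical} and \ref{T:morse}, which recover ``minimal $\Rightarrow$ Courant sharp'' only under genericity/smoothness hypotheses and only by characterizing nodal partitions as critical points of Morse index zero; the paper deliberately does not reprove the HHT result.

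On the merits of your attempt: the $\Leftarrow$ direction is correct and complete --- the disjoint-support ground states give an $n$-dimensional subspace on which the Rayleigh quotient is bounded by $\Lambda(P')$, so $\lambda_n\leq\Lambda(P')$ by min-max, and the nodal partition of a Courant sharp $\psi_n$ attains $\lambda_n$. The $\Rightarrow$ direction, however, has genuine gaps. The largest is the one you yourself flag: Step D is not a proof but a statement of what would need to be proved (``Granting this\dots''), and it is precisely the hard content of \cite{HelHofTer_aip09}. A second, unacknowledged gap infects Steps A and C: you apply the Hadamard shape-derivative formula to the interfaces of an arbitrary minimal partition, but a minimizer of $\Lambda$ has no a priori smoothness --- establishing the regularity of minimal partitions (smooth interfaces away from a lower-dimensional singular set) is itself a substantial part of the Conti--Terracini--Verzini/HHT theory and cannot be assumed before the variational computation. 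Finally, the first-order condition in Step C is misstated: minimality of $\Lambda$ on the equipartition set yields, via Lagrange multipliers, $c_i(\partial_n\phi_i)^2=c_j(\partial_n\phi_j)^2$ on each interface for a single global vector of positive weights $c$ (this is exactly the content of the implication $(\ref{I:ctitical:on_equipartns})\Rightarrow(\ref{I:critical:Lambda_c})\Rightarrow(\ref{I:critical:nodal})$ in Theorem \ref{T:critical}), not the unweighted equality $(\partial_n\phi_i)^2=(\partial_n\phi_j)^2$; your subsequent ``rescaling'' is then redundant as written, and an edge-by-edge rescaling would in any case need a global consistency argument on a non-tree partition graph, which the Lagrange multiplier formulation supplies and your version does not.
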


Our aim is to understand whether there is something that distinguishes
the nodal partitions of the eigenfunctions that are not Courant sharp
(which are the overwhelming majority) and what determines their nodal
deficiencies. It was shown in a recent work \cite{BanBerRazSmi10} that
in the quantum graph situation this question can be answered (see also
\cite{BerRazSmi_jpa12} for the discussion of the discrete graph
case). Inspired by this development, we address here the
eigenfunctions of the operator $H$ defined above. As we will see, one
has to consider not only the minima, but all the critical points of
the functional $\Lambda$ on the ``manifold'' of equipartitions.
Furthermore, it turns out that the Morse index of a critical point
coincides with the nodal deficiency of the corresponding
eigenfunction.

Before we formulate our assumptions we mention that \emph{generically}
with respect to perturbations of the potential $V(x)$ and/or the
domain, the following conditions are satisfied (see
\cite{Uhl_amj76,Alb_psp71,Alb_tams78,NadTotYak_umn01,Mic_amp73,Mic_asn72,IvaKotKre_du77}
and references therein).
\begin{definition}
  \label{def:genericity_eigenfunction}
  We say the eigenfunction $\psi_n$ of $H$ is \bterm{generic} if
  \begin{enumerate}
  \item The corresponding eigenvalue $\lambda_n$ is simple
  \item Zero is a regular value of the eigenfunction $\psi_n$ inside
    $\Omega$ (i.e., $\nabla\psi_n(x)\neq 0$ whenever $\psi_n(x)=
    0$). The normal derivative $\partial\psi_n/\partial \emph{n}$ of
    the eigenfunction $\psi_n$ on the boundary of $\Omega$ has zero as
    its regular value (i.e., the tangential to $\partial\Omega$
    gradient of $\partial\psi_n/\partial \emph{n}$ does not vanish
    whenever $\partial\psi_n/\partial \emph{n} (x)=0$).
  \item The nodal set $N = N(\psi_n)$ is the finite union of
    non-intersecting smooth hyper-surfaces
      $$
      N = \left(\bigcup\limits_k C_k\right) \cup \left(\bigcup\limits_l B_l\right)
      $$
      (see Fig. \ref{F:partition}), where
  \item Each $C_k$ is a closed smooth hyper-surface in $\Omega$.
  \item Each $B_l$ is a smooth hyper-surface,
    whose boundary lies in $\partial\Omega$; $B_l$ intersects
    $\partial\Omega$ transversally.
\end{enumerate}
\end{definition}

In fact, the intersections of $B_l$ with $\partial\Omega$ are
orthogonal, but we will not need to use this information.

\begin{figure}[t]
  \begin{pspicture}(5,4)
    %\psgrid
    \psccurve[linewidth=3pt,showpoints=false](0.8,1)(0,2)(1.2,3)(4,3.5)(4.5,2)(4,1)(0.8,1)
    \pscurve[linewidth=1pt,showpoints=false](0.8,1)(1.4,2.4)(1.2,3)
    \psccurve[linewidth=1pt,showpoints=false](2.2,1.4)(3,2.7)(4,2.8)(3.2,1.4)
    \rput(2.4,3){\Large $\Omega$}
    \rput(1.3,1.4){\large $B_l$}
    \rput(2.5,1.6){\large $C_k$}
  \end{pspicture}
 \caption{A generic partition}\label{F:partition}
\end{figure}
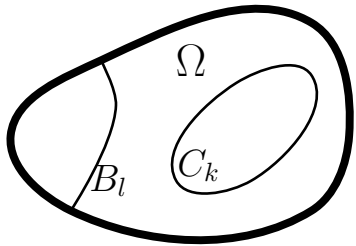

We will be dealing with the {\bf generic situation} only, in the sense
that all conditions (1)-(5) above are satisfied for the eigenfunction
in question.  Partitions that we consider are required to satisfy
similar assumptions.
\begin{definition}
  A partition $P=\{D_j\}$ of $\Omega$ will be called \bterm{generic}, if
  its nodal set $N =  \Omega\setminus\bigcup D_j$ satisfies conditions
  (3)-(5) above.
\end{definition}

We now briefly describe the constructions and results of the paper.

Given a generic $\nu$-partition $P$ and a sufficiently small positive
number $\rho>0$, we will introduce the set $\cP_\rho$ of
$\nu$-partitions that are ``close'' to $P$ in an appropriate
sense. Here $\rho$ indicates a measure of closeness. The set
$\cP_\rho$ will be equipped with the structure of a real Hilbert
manifold by identifying it with a ball in an appropriate functional
Hilbert space.  We will denote by $\cE_\rho$ the subset of $\cP_\rho$
that consists of equipartitions.

On the set $\cE_\rho$ of equipartitions one can consider the functional
$$
\Lambda: \cE\mapsto \R
$$
that maps a partition $P$ into the (common) lowest energy
$\lambda_1(P_j)$ of any sub-domain $P_j$. The notation $\Lambda$ does
not contradict the one used previously in \eqref{eq:Lambda_def}, since
on equipartitions the two functionals obviously coincide. We will also
need some other extensions of the functional $\Lambda$ from the set
$\cE_\rho$ of equipartitions to the whole $\cP_\rho$. Let
$c=(c_1,\dots,c_\nu)\in\R^\nu$ be a unit simplex vector, i.e. such
that $c_j\geq 0$ and $\sum c_j=1$. We define the functional
$\Lambda_c$ on $\cP_\rho$ as follows:
$$
\Lambda_c (P)=\sum c_j\lambda_1(P_j),
$$
where $P\in\cP_\rho$ and $P_j$ are the sub-domains of this partition.
It is obvious that for any unit simplex vector $c$ the restriction of
$\Lambda_c$ to $\cE_\rho$ coincides with $\Lambda$.

We will need the following auxiliary result:
\begin{proposition}\label{P:manifold}\indent
  \begin{enumerate}
  \item For any $c$, the functional $\Lambda_c$ on $\cP_\rho$ is $C^\infty$-smooth.
  \item For a sufficiently small $\rho$, $\cE_\rho$ is a smooth
    sub-manifold of $\cP_\rho$ of co-dimension $\nu-1$.
  \item The functional $\Lambda$ on $\cE_\rho$ is $C^\infty$-smooth.
  \end{enumerate}
\end{proposition}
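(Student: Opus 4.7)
The plan is to pass from the geometric description of $\cP_\rho$ to an analytic one by parametrizing nearby partitions via normal displacements of the nodal set, and then to reduce parts (1)--(3) to the Hadamard shape-perturbation formula for simple ground-state eigenvalues together with the implicit function theorem.

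First I would parametrize $\cP_\rho$ concretely. On each smooth piece $C_k$ or $B_l$ of the nodal set $N$ I fix a unit normal field, and to a small section of its normal bundle (required to be tangent to $\partial\Omega$ at the endpoints of $B_l$, so as to preserve transversality) I associate the partition obtained by pushing $N$ along that section. Taking these sections from a sufficiently high-order Sobolev space identifies $\cP_\rho$ with a ball of radius $\rho$ in a Hilbert space $\mathcal{U}$ and endows $\cP_\rho$ with its Hilbert-manifold structure. For (1) I pull the Dirichlet problems $H(P_j(v))$ back to the reference partition via a smooth family of diffeomorphisms $\Phi_v$ of $\overline{\Omega}$ depending smoothly on $v\in\mathcal{U}$. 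The pulled-back operator has coefficients depending smoothly on $v$, and its ground state is simple; Kato's analytic perturbation theory then gives $v\mapsto\lambda_1(P_j(v))\in C^\infty$, and summing with weights $c_j$ yields smoothness of $\Lambda_c$.

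For (2), set
$$
F(P)=\bigl(\lambda_1(P_1)-\lambda_1(P_\nu),\ldots,\lambda_1(P_{\nu-1})-\lambda_1(P_\nu)\bigr),
$$
so that $\cE_\rho=F^{-1}(0)$, and apply the implicit function theorem. The real content is to verify that $dF$ is surjective at $P$; equivalently, that if $\alpha\in\R^\nu$ satisfies $\sum_j\alpha_j=0$ and $\sum_j\alpha_j\,d\lambda_1(P_j)=0$, then $\alpha=0$. Hadamard's formula gives, for a displacement $v$ supported on a single interior piece $N_m$ shared by $P_{j^+}$ and $P_{j^-}$,
$$
d\lambda_1(P_{j^\pm})(v) = \mp \int_{N_m}\bigl|\partial\phi_{j^\pm}/\partial n\bigr|^2\, v\, dS,
$$
with $\phi_{j^\pm}$ the positive normalized ground states; no other $\lambda_1(P_j)$ is affected. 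Localizing on each $N_m$ the annihilation condition becomes the pointwise identity $\alpha_{j^+}\bigl|\partial\phi_{j^+}/\partial n\bigr|^2 = \alpha_{j^-}\bigl|\partial\phi_{j^-}/\partial n\bigr|^2$. By Hopf's lemma both normal-derivative squares are strictly positive on the smooth pieces of the internal boundary, so $\alpha_{j^+}$ and $\alpha_{j^-}$ are either both zero or of a common strict sign. Connectedness of the partition graph then propagates this to make all $\alpha_j$ of one common sign, which combined with $\sum_j\alpha_j=0$ forces $\alpha=0$. The implicit function theorem now delivers the smooth codimension-$(\nu-1)$ submanifold $\cE_\rho$.

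Part (3) is then immediate: the restriction of the smooth functional $\Lambda_c$ from (1) to the smooth submanifold $\cE_\rho$ from (2) coincides on equipartitions with $\Lambda$, and hence $\Lambda\in C^\infty(\cE_\rho)$. The principal obstacle is the transversality step in (2); the parametrization and the Hadamard formula itself are standard, but one must rule out every possible linear dependence of the $\nu$ shape derivatives whose coefficients sum to zero. The Hopf-lemma sign argument together with the connectedness of the partition graph is the heart of the proof, and is what makes the codimension of $\cE_\rho$ exactly $\nu-1$ (and no larger).
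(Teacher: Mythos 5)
Your proposal is correct and follows essentially the same route as the paper: the same identification of $\cP_\rho$ with a ball of normal-displacement fields in a Sobolev space, the same pullback-to-a-fixed-domain plus simple-eigenvalue perturbation theory for smoothness, and the same Hadamard-formula argument showing that the shape derivatives of the $\lambda_1(P_j)$, combined with connectedness of the partition graph, leave no room for a nontrivial annihilating vector with zero component sum. The only cosmetic differences are that you phrase the codimension count via the implicit function theorem for the difference map $F$ rather than via transversality of $\Xi$ to the diagonal (an equivalent formulation), and you invoke Hopf's lemma for strict positivity of $(\partial\phi_j/\partial n)^2$ where the paper gets by with the weaker fact, from Cauchy uniqueness, that the normal derivative is not identically zero on an interface.
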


This allows us now to formulate the first of the two main results of the paper:

\begin{theorem}\label{T:critical}
  Let $P$ be a generic bipartite equipartition of a smooth domain
  $\Omega$. Then, the following statements are equivalent:
  \begin{enumerate}
  \item \label{I:critical:nodal} $P$ is nodal (i.e., $P$ is the nodal
    partition of an eigenfunction $\psi$ of $H$).
  \item \label{I:critical:Lambda_c} There exists a vector
    $c=(c_1,\dots,c_\nu)\in\R^\nu$, $c_j\geq 0$, $\sum c_j=1$, such
    that $P$ is a critical point of the functional $\Lambda_c$ on
    $\cP_\rho$. In this case,
    $$
    c_j=\|\psi\|^2_{L^2(P_j)}=\int\limits_{P_j}|\psi(x)|^2 dx.
    $$
  \item \label{I:ctitical:on_equipartns} $P$ is a critical point of the
    functional $\Lambda$ on $\cE_\rho$.
  \end{enumerate}
  If $\Omega$ is simply connected, the above statements are also
  equivalent to the following,
  \begin{enumerate}
  \item[(4)] At any of the boundary surfaces $C_j$ and $B_l$, the
    normal derivatives of the groundstates for the two adjacent
    sub-domains are proportional.
  \end{enumerate}
\end{theorem}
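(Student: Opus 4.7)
The plan is to derive all four conditions from a single computation based on the Hadamard variational formula for Dirichlet eigenvalues. If the boundary of a domain $D$ is perturbed with outward-normal scalar velocity $h$, then
$$
  \delta \lambda_1(D) = -\int_{\partial D} |\partial_\nu \phi|^2\, h\, d\sigma,
$$
where $\phi$ is the positive $L^2$-normalized ground state and $\nu$ is the outward normal. Pulled back to $\cP_\rho$, this gives the differential of $F_j(P):=\lambda_1(P_j)$: on an interface $S$ separating $P_i$ from $P_j$, a normal perturbation of scalar speed $h$ directed from $P_i$ into $P_j$ contributes $dF_i(h)=-\int_S|\partial_\nu\phi_i|^2 h\,d\sigma$ and $dF_j(h)=+\int_S|\partial_\nu\phi_j|^2 h\,d\sigma$, while $dF_k(h)=0$ for $k\ne i,j$. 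Consequently $P$ will be critical for $\Lambda_c$ on $\cP_\rho$ precisely when
\begin{equation*}
  c_i |\partial_\nu\phi_i|^2 \;=\; c_j |\partial_\nu\phi_j|^2
  \qquad \text{pointwise on every interface } S_{ij}.
\end{equation*}
All four statements will then be interpretations of this single identity.

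For $(1)\Rightarrow(2)$ I would verify the identity by direct inspection: when $\psi$ is $L^2$-normalized with nodal partition $P$, the ground state of $P_j$ is $\phi_j=\pm\psi/\|\psi\|_{L^2(P_j)}$, so setting $c_j:=\|\psi\|_{L^2(P_j)}^2$ makes both sides equal to $|\partial_\nu\psi|^2$ on every interface, and $\sum_j c_j=\|\psi\|^2_{L^2(\Omega)}=1$. The implication $(2)\Rightarrow(3)$ is automatic because $\Lambda_c$ restricts to $\Lambda$ on $\cE_\rho\subset\cP_\rho$. For $(3)\Rightarrow(2)$ I would apply Lagrange multipliers to the $\nu-1$ independent constraints $F_k-F_{k+1}=0$ cutting out $\cE_\rho$ (independence is part of Proposition~\ref{P:manifold}): criticality of $\Lambda$ on $\cE_\rho$ yields real scalars $c_1,\dots,c_\nu$ with $\sum c_j=1$ and $\sum c_j\, dF_j=0$. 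Positivity is the subtle point; the argument I have in mind invokes Hopf's lemma, which guarantees $\partial_\nu\phi_i\ne 0$ on every smooth piece of $\partial P_i$ (since $\phi_i>0$ in $P_i$). The identity then forces $c_i$ and $c_j$ to carry the same nonzero sign at each adjacency, connectedness of the partition graph (inherited from that of $\Omega$) propagates a common sign to all $c_j$, and $\sum c_j=1$ fixes it as positive.

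The implication $(2)\Rightarrow(4)$ is immediate: with $c_j>0$, the identity $c_i|\partial_\nu\phi_i|^2=c_j|\partial_\nu\phi_j|^2$ says precisely that $\partial_\nu\phi_i$ and $\partial_\nu\phi_j$ are proportional, with constant ratio $\sqrt{c_j/c_i}$, on each interface. For the converse $(4)\Rightarrow(1)$ under simple connectedness of $\Omega$, I would construct an eigenfunction as follows: using bipartiteness, fix a $2$-coloring $\varepsilon:\{P_j\}\to\{\pm 1\}$ assigning opposite signs to adjacent sub-domains, set $\psi|_{P_j}:=\varepsilon_j a_j \phi_j$ for positive scalars $a_j$ to be determined, and choose the $a_j$ inductively so that the two one-sided normal derivatives of $\psi$ agree on each interface; the proportionality constants furnished by $(4)$ prescribe the ratios $a_j/a_i$. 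Equipartition gives $-\Delta\psi+V\psi=\Lambda(P)\psi$ inside each $P_j$, and elliptic regularity at the smooth interfaces promotes the $C^1$ matching to global smoothness, exhibiting $\psi$ as a bona fide eigenfunction of $H$ with eigenvalue $\Lambda(P)$.

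The main obstacle I anticipate is the global well-definedness of the $a_j$: the prescribed ratios must have trivial holonomy around every cycle of the partition graph. This is where simple connectedness of $\Omega$ must enter. The claim I would establish is that, in a simply connected $\Omega$ carrying a generic partition, the dual graph is in fact a tree: each $C_k$ and each $B_l$ separates $\Omega$ (Jordan--Brouwer for closed $C_k$; for $B_l$ from the fact that both its endpoints lie on $\partial\Omega$), and pairwise disjointness of these hypersurfaces forces a tree-like nesting. With no cycles in the dual graph, the inductive determination of the $a_j$ proceeds unobstructed, and the circle of implications closes.
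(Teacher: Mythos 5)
Your proposal is correct and follows essentially the same route as the paper: the Rayleigh--Hadamard formula reduces everything to the pointwise interface identity $c_i\,(\partial_n\phi_i)^2 = c_j\,(\partial_n\phi_j)^2$, with Lagrange multipliers plus a sign/unique-continuation argument for $(3)\Rightarrow(2)$, bipartiteness to reconstruct the eigenfunction, and the tree structure of the partition graph of a simply connected domain to handle statement (4). The only difference is cosmetic: you spell out the equivalence with (4) and the holonomy issue around cycles more explicitly than the paper, which relegates the tree argument to a remark.
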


\begin{remark}
  If the domain $\Omega$ is simply connected, the partition graph of a
  generic partition is a tree and thus bipartite automatically.
  Indeed, if the partition graph is \emph{not} a tree, we can remove
  at least one edge while keeping the graph connected.  That means
  that $\Omega$ will also remain connected after removal of the
  corresponding piece of the partition boundary, $C_j$ or $B_l$ (call
  it $B$, for simplicity). However, the hyper-surface $B$ is either closed
  or has a boundary in $\partial \Omega$. Therefore, due to simply-connectedness
  of $\Omega$, its removal should disconnect $\Omega$.
\end{remark}

Let $P$ be the partition corresponding to a generic eigenfunction.
Theorem~\ref{T:critical} implies that, up to corrections of higher
order, the functional $\Lambda$ is given by
\begin{equation*}
  \Lambda(P + \Delta P) = \Lambda(P) + F_2(\Delta P, \Delta P) +
  \mbox{higher order terms},
\end{equation*}
where $F_2$ --- the Hessian --- is a quadratic form on the tangent
space of $\cE_\rho$ at $P$ (which will be described in detail in
Section \ref{S:manifolds}).  The \textbf{Morse index} of $P$ is then
defined as the maximal dimension of the subspace on which $F_2$ is
negative definite.  Informally, the Morse index counts directions in
which the function $\Lambda$ is unstable, i.e. decreasing in value.
We also define the \textbf{$\mu^0$ - index} as the maximal dimension of a
subspace on which $F_2$ is non-positive.  If the Morse and $\mu^0$
indices coincide, the critical point is called \textbf{non-degenerate}.

Our second main result is the following interpretation of the nodal deficiency:

\begin{theorem}\label{T:morse}
  Let $\psi_n$ be a generic (see Definition~\ref{def:genericity_eigenfunction})
  eigenfunction of $H$ and $P$ be its nodal partition. Then $P$ is a
  non-degenerate critical point of $\Lambda$ restricted to $\cE_\rho$
  and the nodal deficiency $d_n=n-\nu_{\psi_n}$ is equal to the Morse
  index of $\Lambda$ at the point $P$.
%
%  Here the {\bf Morse index} is the number of negative eigenvalues of
%  the Hessian of the functional at the point $P$.
\end{theorem}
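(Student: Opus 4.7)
The plan is to compute the Hessian $F_2$ of $\Lambda$ on $\cE_\rho$ at $P$ via domain perturbation, identify it with a quadratic form coming from a Dirichlet-to-Neumann-type operator on the nodal set $N$, and count its negative eigenvalues through a spectral-flow argument --- the continuum analogue of the quantum-graph computation of \cite{BanBerRazSmi10}.

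By Theorem~\ref{T:critical}, $P$ is a critical point of $\Lambda_c$ on the ambient space $\cP_\rho$ for the weights $c_j=\|\psi_n\|_{L^2(D_j)}^2$, and $\Lambda_c|_{\cE_\rho}\equiv\Lambda$. The Hessian of $\Lambda$ on $\cE_\rho$ is therefore the restriction of the Hessian of $\Lambda_c$ on $\cP_\rho$ to the codimension-$(\nu-1)$ tangent space $T_P\cE_\rho$ (Proposition~\ref{P:manifold}). I would parametrize nearby partitions by a scalar normal displacement $h$ on the smooth hypersurface $N$ and apply the Hadamard first- and second-variation formulas to each $\lambda_1(D_j)$. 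Because $P$ is nodal, the normalized ground state $\varphi_j$ of $H(D_j)$ is proportional to $\psi_n|_{D_j}$, so $|\partial_\nu\varphi_j|^2=c_j^{-1}|\partial_\nu\psi_n|^2$ on $N$; consequently $F_2(h,h)$ can be written entirely in terms of $\psi_n$, its normal derivative on $N$, and the Dirichlet resolvent of $H-\lambda_n$ in each sub-domain.

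In the variable $g:=h\,\partial_\nu\psi_n$ the ambient Hessian takes the form $\langle M(\lambda_n)g,g\rangle_{L^2(N)}$, where $M(\lambda)$ is the (suitably regularized) Dirichlet-to-Neumann operator associated to transmission across $N$: for admissible $f$, solve $(H-\lambda)u_j=0$ in $D_j$ with $u_j|_N=f$ and $u_j|_{\partial\Omega}=0$, and set $M(\lambda)f=\sum_j\partial_{\nu_j}u_j|_N$. The eigenvalues of $M(\lambda)$ depend monotonically on $\lambda$, and each eigenvalue of $H(\Omega)$ strictly below $\lambda_n$ forces a zero-crossing of one eigenvalue of $M(\lambda)$; by spectral flow this yields exactly $n-1$ negative eigenvalues of the ambient Hessian, with no residual zero eigenvalue thanks to the generic simplicity of $\lambda_n$.

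Finally, the restriction to $T_P\cE_\rho$ must remove precisely $\nu-1$ of these negative directions. The constraints defining $T_P\cE_\rho$ are, by Hadamard's first-variation formula, the $\nu-1$ conditions that $c_j^{-1}\int_{\partial D_j\cap N}|\partial_\nu\psi_n|^2\,h\,dS$ be independent of $j$. Each such constraint is saturated by a deformation that enlarges one $D_j$ at the expense of a neighbour, which strictly lowers $\lambda_1(D_j)$ and thus lies in the negative cone of $M(\lambda_n)$; showing that these $\nu-1$ sub-domain rescaling directions are linearly independent in the negative eigenspace --- using bipartiteness and the sign structure of $\partial_\nu\psi_n$ across $N$ --- is the main technical obstacle. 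Once established, the Morse index of $F_2$ equals $(n-1)-(\nu-1)=n-\nu=d_n$, and the absence of a zero eigenvalue gives non-degeneracy.
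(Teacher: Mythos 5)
Your route is genuinely different from the paper's. The paper never computes a second variation of $\Lambda_c$: instead it lifts $\Lambda$ to the quadratic form $Q[f]=\int_\Omega(|\nabla f|^2+V|f|^2)\,dx$ on the sphere bundle $SB$ of functions vanishing on the partition interfaces, reads off the Morse index $n-1$ at $(P(\psi_n),\psi_n)$ directly from the min--max characterization of $\lambda_n$ (Lemma~\ref{L:indn-1}), descends to the $\nu$-dimensional ground-state bundle $SC$ via the reduction principle of Theorem~\ref{T: reduction} (the fiberwise minima argument), and finally uses the fact that $Q$ on $SC_E$ is the pull-back of $\Lambda$ from $\cE_\rho$ (Lemma~\ref{L:pullback}) to squeeze the index between $n-\nu$ and $n-\nu$. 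Your Dirichlet-to-Neumann/spectral-flow picture is a legitimate alternative in principle, but as written it has two genuine gaps. The first is that the central analytic claims --- that after the substitution $g=h\,\partial_\nu\psi_n$ the ambient Hessian becomes $\langle M(\lambda_n)g,g\rangle_{L^2(N)}$, and that spectral flow of $M(\lambda)$ produces exactly $n-1$ negative directions --- are asserted rather than proved, and they are far from routine: $\lambda_n$ is simultaneously an eigenvalue of $H(\Omega)$ and the ground-state energy of \emph{every} $H(D_j)$, so the transmission Dirichlet-to-Neumann map is singular precisely at the energy where you need to evaluate it, and the ``suitable regularization'' you wave at is where essentially all of the work lives.

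The second gap is more serious because it is a logical one. Exhibiting $\nu-1$ linearly independent directions that are transverse to $T_P\cE_\rho$ and lie in the negative cone (or even in the negative eigenspace) of the ambient Hessian does \emph{not} imply that the Morse index drops by $\nu-1$ upon restriction. A two-dimensional counterexample: for $F=-x_1^2+x_2^2$ and $W=\mathrm{span}(e_1+\varepsilon e_2)$ with small $\varepsilon$, the vector $e_1\notin W$ spans the negative eigenspace, yet $F|_W$ still has index $1$, not $1-1=0$. The point is that a maximal negative subspace of the restriction need not sit inside the negative eigenspace of the ambient form, so your count $(n-1)-(\nu-1)$ does not follow even once the ``main technical obstacle'' you identify is resolved. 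What is missing is the upper bound $\mu\leq n-\nu$, and for that you need either a genuine block-diagonalization of the Hessian adapted to $T_P\cE_\rho$ (the role played by Theorem~\ref{T: reduction} in the paper) or the paper's $\mu^0$ bookkeeping: the pull-back of $\Lambda$ to $SC_E$ is constant along the $(\nu-1)$-dimensional fibers, hence its $\mu^0$-index is at least $\mu^0(\Lambda)+(\nu-1)$, while restriction from $SC$ to the codimension-$(\nu-1)$ submanifold $SC_E$ keeps that $\mu^0$-index at most $n-1$; combining gives $\mu\leq\mu^0(\Lambda)\leq n-\nu$ and, with the easy lower bound, non-degeneracy as well. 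Your sketch establishes (modulo the DtN analysis) only the lower bound, which already follows from general position.
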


\begin{remark}
  To summarize Theorems~\ref{T:critical} and \ref{T:morse}, instead of
  looking at the minimal points of $\Lambda$, one has to look at the
  critical min-max points, where the maximum is taken over a subspace
  of dimension equal to the nodal deficiency $d_n$. This explains, in
  particular, why the minimal partitions correspond the Courant sharp
  eigenfunctions only.
\end{remark}

The structure of this article is as follows. Section \ref{S:variation} contains a brief exposition of the well known Rayleigh-Hadamard formula for the derivative of an eigenvalue with respect to the domain variation. The manifolds $\cP_\rho$ and $\cE_\rho$ are introduced in Section \ref{S:manifolds}, where also the Proposition \ref{P:manifold} is proven. Theorems \ref{T:critical} and \ref{T:morse} are proven in sections \ref{S:critical} and \ref{S:morse}. Section \ref{S:remarks} contains final remarks and conclusions. In particular, it offers various possible generalizations of the results of this paper.

%%%%%%%%%%%%%%%%
\section{Domain variation formulas}\label{S:variation}
%%%%%%%%%%%%%%%%%

In this section we provide the formulas for eigenvalue perturbation due to domain variation, which will be important for our considerations. Such formulas have a long history, going back to J.~Rayleigh \cite{Ray_sound} and J.~Hadamard \cite{Had_book08} and are still being developed (see \cite{IvaKotKre_du77,Gar_book86,Gar_Sch_jam53,Gri_jot10,Koz_jde06,Fri_ijm92,Hezari,Peetre,Fuji,FujOza}
for further results and references).

Let $D$ be a proper sub-domain of $\Omega$ (i.e., the closure of $D$
belongs to $\Omega$) with a smooth boundary $C=\partial D$. Later on
in this text $D$ will be one of the sub-domains of a partition. We
denote by $\psi_1(D)$ and $\lambda_1(D)$ the positive
normalized groundstate and the corresponding eigenvalue of $H(D)$.

We are interested in the variation of $\lambda_1(D)$ with respect to
infinitesimal smooth deformations of the boundary $C$. To make it
precise, let us consider the unit external normal vector field on $C$
and extend it into a neighborhood $U$ of $C$ to a smooth unit length
vector field $N(x)$ whose trajectories are the normals to $C$. Let us
now also have a sufficiently smooth real valued function $f(x)$ in
$U$. Consider the normal to $C$ vector field $f(x)N(x)$ and the
corresponding evolution operators $G_t$ of the ``time'' $t$ shift
along the trajectories of this field. They are defined for
sufficiently small values of $t$ and produce deformed surfaces
$C_t=G_t C$ and the variable domains $D_t$ bounded by these surfaces.
Correspondingly, the ground state eigenvalue $\lambda_1(D_t)$ is a
function of $t$. The following result (Rayleigh-Hadamard formula) is
well known
(e.g. \cite{IvaKotKre_du77,Gar_book86,Had_book08,Gar_Sch_jam53,Ray_sound,Gri_jot10,Koz_jde06,Fri_ijm92,Peetre,Fuji,FujOza})
for the case when $V(x)=0$. However, its proof (e.g. the one in
\cite{Gri_jot10}) is valid for non-zero potentials as well.

\begin{theorem}\label{T:eig_var}
The $t$-derivative at $t=0$ of the eigenvalue $\lambda_1(D_t)$ is given by the formula
\begin{equation}\label{E:hadamard}
    \lambda_1^\prime =-\int\limits_C \left(\frac{\partial \psi_1(D)}{\partial n}\right)^2f(x)dS,
\end{equation}
where $\partial/\partial n$ denotes the external normal derivative on
$C$ and $\psi_1(D)$ is, as before, the normalized Dirichlet
ground-state of the domain $D$.
\end{theorem}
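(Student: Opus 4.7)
My plan is the classical one: differentiate the eigenvalue equation and the normalization condition with respect to $t$, then use Green's second identity to convert the resulting interior relation into a boundary integral. Write $\lambda_t := \lambda_1(D_t)$ and let $\phi_t := \psi_1(D_t)$ be the positive, $L^2(D_t)$-normalized ground state. Since $\lambda_0$ is a simple eigenvalue and the boundary $C_t$ varies smoothly in $t$, analytic perturbation theory (after the pull-back described below) supplies the smoothness of $\lambda_t$ and $\phi_t$ in $t$. Denote by $\dot\lambda$ the $t$-derivative of $\lambda_t$ at $t=0$, by $\dot\phi$ the $t$-derivative at $t=0$ of a smooth extension of $\phi_t$ to a fixed neighborhood of $\overline{D}$, and write $\phi := \phi_0 = \psi_1(D)$.

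Differentiating $H\phi_t = \lambda_t\phi_t$ inside $D$ yields $(H-\lambda_0)\dot\phi = \dot\lambda\,\phi$ in the interior. Pairing with $\phi$ in $L^2(D)$ and using $\|\phi\|_{L^2(D)}=1$ gives
\begin{equation*}
  \dot\lambda = \int_D \phi\,(H-\lambda_0)\dot\phi\, dx.
\end{equation*}
Because $V$ contributes no boundary term and $(H-\lambda_0)\phi=0$, Green's second identity collapses the right-hand side to a single boundary term (the $\phi\,\partial\dot\phi/\partial n$ piece vanishes since $\phi|_C=0$):
\begin{equation*}
  \dot\lambda = \int_C \dot\phi\,\frac{\partial\phi}{\partial n}\,dS.
\end{equation*}
A companion identity $\int_D \phi\dot\phi\,dx = 0$, obtained by applying Reynolds' transport formula to $\|\phi_t\|^2_{L^2(D_t)}=1$ (its boundary contribution also vanishes because $\phi|_C = 0$), is what justifies dropping the $\lambda_0\int_D \phi\dot\phi\,dx$ term above.

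To identify $\dot\phi$ on $C$, I differentiate the boundary identity $\phi_t(G_t(x)) = 0$, valid for all $x\in C$ and all small $t$. Since $G_0(x) = x$ and $dG_t(x)/dt|_{t=0} = f(x)N(x)$, the chain rule gives
\begin{equation*}
  \dot\phi(x) + f(x)\,\frac{\partial\phi}{\partial n}(x) = 0,\qquad x\in C,
\end{equation*}
so $\dot\phi|_C = -f\,\partial\phi/\partial n$. Plugging this into the previous display recovers formula \eqref{E:hadamard} with the correct minus sign (which also has the expected sign: an outward push $f>0$ enlarges $D$ and lowers the ground-state eigenvalue).

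The one genuine obstacle is giving rigorous meaning to $\dot\phi$, since each $\phi_t$ lives on the $t$-dependent space $L^2(D_t)$. The standard remedy is the pull-back $u_t := \phi_t\circ G_t\in H^1_0(D)$, which satisfies a smoothly $t$-dependent elliptic eigenvalue problem on the \emph{fixed} space $H^1_0(D)$; Kato's analytic perturbation theory for the simple eigenvalue $\lambda_0$ then yields smoothness of $(\lambda_t,u_t)$ in $t$, and differentiating back through $G_t$ produces the interior regularity of $\dot\phi$ on compact subsets of $D$ as well as its boundary trace on $C$ that enters the calculation above. The presence of the smooth potential $V$ does not affect any step, since $V$ is a symmetric multiplication operator and therefore drops out of Green's identity, matching the observation in the paper that the argument of \cite{Gri_jot10} (written for $V\equiv 0$) carries over verbatim.
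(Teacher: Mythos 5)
Your derivation is correct and is essentially the classical Rayleigh--Hadamard argument that the paper does not reproduce but delegates to the cited references (e.g.\ \cite{Gri_jot10}): differentiate the eigenvalue equation and the boundary identity $\phi_t(G_t(x))=0$ along the flow, convert to a boundary integral via Green's second identity (the potential $V$ dropping out by symmetry), and secure regularity by pulling back to the fixed domain and invoking perturbation theory for the simple eigenvalue. The only cosmetic point is that the orthogonality $\int_D \phi\,\dot\phi\,dx=0$ is not actually needed, since the $\lambda_0$ terms cancel on both sides of Green's identity; everything else, including the sign, checks out.
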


\begin{remark}\label{R:eig_var}
  If one uses a smooth vector field $M(x)$ instead of
  the unit normal vector field $N(x)$, an analog of formula
  (\ref{E:hadamard}) for the $t$-derivative at $t=0$ of the eigenvalue
  $\lambda_1(D_t)$ is given by
\begin{equation}\label{E:Mhadamard}
    \lambda_1^\prime =-\int\limits_C \left(\frac{\partial \psi_1(D)}{\partial n}\right)^2f(x)M(x)\cdot N(x)dS,
\end{equation}
where $\partial/\partial n$ denotes the external normal derivative on $C$ and $M\cdot N$ is the inner product of vectors $M$ and $N$.
\end{remark}

Some of the nodal sub-domains will reach the boundary. We thus also need to consider the case of $D\subset \Omega$ that is cut out from $\Omega$ by a smooth surface (curve when $d=2$) $B$ transversal to the boundary $\partial\Omega$ (Fig. \ref{F:cutout}).
\begin{figure}[t]
  \begin{pspicture}(5,4)
    %\psgrid
    \pscurve[linewidth=1pt,linestyle=dashed](2,3.4)(4,3.5)(4.5,2)(4,1)(2,0.6)
    \pscurve[linewidth=2pt](2,0.6) (0.8,1)(0.1,2)(2,3.4)
    \pscurve[linewidth=1pt,showpoints=false](2,0.6)(2.6,1.2)(2.5,2)(2,3.4)
    \pscurve[linewidth=1pt,showpoints=false,linecolor=red](2.4,0.6)(2.9,1.2)(2.8,2.2)(2.4,3.53)
    \psline[linewidth=1.5pt,linecolor=red]{->}(2,0.6)(2.4,0.6)
    \psline[linewidth=1.5pt,linecolor=red]{->}(2.6,1.2)(2.9,1.2)
    \psline[linewidth=1.5pt,linecolor=red]{->}(2.45,2.1)(2.8,2.2)
    \psline[linewidth=1.5pt,linecolor=red]{->}(2,3.4)(2.4,3.53)
   \rput(3.5,2.5){\Large $\Omega$}
    \rput(2.2,1.4){\large $B$}
    \rput(1.3,1.9){\large $D$}
  \end{pspicture}
  \caption{Domain $D$ cut out from $\Omega$ by $B$ and its deformation.}\label{F:cutout}
\end{figure}
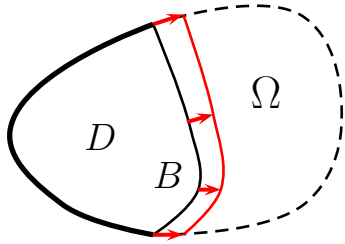

Consider again the external unit normal vector field $N(x)$ to $B$. We
can assume that this field is modified near the boundary
$\partial\Omega$ to a smooth vector field $M(x)$ that is (i)
non-tangential to $B$ and (ii) tangential to $\partial\Omega$ on the
intersection $B\cap\partial\Omega$. It can then be extended to a
smooth vector field $M$ near $B$ such that the trajectories that start
at the boundary points (i.e., points of $B\cap\partial\Omega$) stay on
the boundary $\partial\Omega$ and such that the level sets $G_tB$ for
small values of $t$ are transversal to $\partial\Omega$. Let $f\in
H^s(B)$ (for a sufficiently large $s$) be a real valued function on
$B$. Then, similarly to the case of an internal part of $\Omega$
considered above, one can deform the surface $B$ by a vector field
$fM$, which will define sub-domains $D_t$ of $\Omega$ with their
boundaries $B_t$ transversal to $\partial\Omega$. With this setup, the
variation formula completely identical to (\ref{E:Mhadamard}) holds.

% The following
% formula, analogous to (\ref{E:Mhadamard}) holds:
%
% \begin{theorem}\label{T:eig_var_boundary}
% The $t$-derivative at $t=0$ of the eigenvalue $\lambda_1(D_t)$ is given by the formula
% \begin{equation}\label{E:hadamard_cut}
%     \lambda_1^\prime =-\int\limits_B \left(\frac{\partial \psi_1(D)}{\partial n}\right)^2f(x)M(x)\cdot N(x)dS,
% \end{equation}
% where $\partial/\partial n$ denotes the external normal derivative on $B$.
% \end{theorem}

Although in this case the variation formula may or may not have been written before, the proof of Theorem \ref{T:eig_var} given, for instance, in \cite{Gri_jot10} carries through without any change.

\begin{remark}\label{R:sign}\indent
If the boundary of the domain $D$ is not connected, then the same variation formula holds, involving the sum of integrals over each connected component $C_j$ and $B_l$ of the boundary.
\end{remark}

%%%%%%%%%%%%%%%%
\section{Manifolds of partitions. Proof of Proposition \ref{P:manifold}}\label{S:manifolds}
%%%%%%%%%%%%%%%%

Let us now consider a {\bf generic} $\nu$-partition
$P=\{P_j\}_{j=1}^\nu$ of $\Omega$. We need to introduce a manifold
structure into the space of ``nearby'' $\nu$-partitions. The previous
section suggests a simple way of doing so. Namely, let $\{C_k, B_l\}$
be the smooth connected surfaces constituting the boundaries between
the sub-domains $P_j$ inside $\Omega$ ($\partial \Omega$ also
contributes to the boundaries of some of the sub-domains, but is not
taken into account, since it is not going to be changed). Let us fix a
smooth non-tangential to $\{C_k, B_l\}$ vector field $M$ in a
neighborhood $U$ of $\left(\cup_k C_k\right) \bigcup
\left(\cup_{l}B_l\right)$, which satisfies the conditions
imposed in the previous section (e.g., one can assume that outside of
a neighborhood of $\partial\Omega$, this is just the unit normal field
$N(x)$, which is smoothly modified near $\partial\Omega$ to be
tangential to $\partial\Omega$). Such a field always exists under the
genericity condition imposed on the partition $P$.

Let us pick a sufficiently large positive number $s$ (e.g., $s>(d+4)/2$ will suffice) and consider the space
\begin{equation}\label{E:space}
    \cF:=\left(\mathop{\oplus}\limits_k H^s(C_k)\right)\bigoplus \left(\mathop{\oplus}\limits_l H^s(B_l)\right),
\end{equation}
where $H^s$ is the standard Sobolev space of order $s$. We also introduce a continuous linear extension operator
\begin{equation}\label{E:extension}
    E: \cF\to H^{s+1/2}(U).
\end{equation}
In other words, the restriction of $E (f)$ to $C_k$ coincides with $f$. Such an extension operator is well
known to exist (see, e.g. \cite{LioMag_nonhomog72}). We will also assume that all the extended functions  $E (f)$ vanish outside of a small neighborhood of the nodal set, which can be achieved by multiplication by an appropriate smooth cut-off function.

Consider the ball $\cB_\rho$ of a small radius $\rho>0$ around the origin in the space $\cF$. Let $f\in \cB_\rho$ and $G_f$ be the shift by time $t=1$ along the trajectories of the vector field $E(f)(x)M(x)$. For a sufficiently small $\rho>0$, $G_f$ is a diffeomorphism, which preserves the boundary $\partial\Omega$. Its action on the surfaces $C_k$ and $B_l$ leads to another $\nu$-partition $G_fP$ of $\Omega$, which is close to the original partition $P$. We will denote this set of partitions by $\cP_\rho$ and identify it with the ball $\cB_\rho$ in the Hilbert space $\cF$. This, in particular, introduces the structure of a Hilbert manifold on $\cP_\rho$. Notice also that we will use consistent numbering of the sub-domains of partitions in $\cP_\rho$. Namely,
$$
(G_fP)_j=G_f(P_j),
$$
where, as before, for a partition $\Pi$ we denote by $\Pi_j$ its $j$th sub-domain, and $P_j$ are the sub-domains of the original (unperturbed) partition $P$.

We introduce now a mapping $\Xi$ from $\cP_\rho$ into $\R^{\nu}$ as follows: for a partition $\Pi$ we define
\begin{equation}\label{E:multimapping}
    \Xi (\Pi):=(\lambda_1(\Pi_1), \lambda_1(\Pi_2),\dots,\lambda_1(\Pi_\nu)).
\end{equation}
Let
$$
\Delta:=(\lambda,\lambda,\dots,\lambda)
$$
be the diagonal in $\R^\nu$.

\begin{definition}\label{D:equipartition}
  The set $\cE_\rho$ consists of all equipartitions in $\cP_\rho$, i.e.\
  $\Pi\in\cP_\rho$ satisfying
  $$
  \lambda_1(\Pi_1)=\lambda_1(\Pi_2)=\dots=\lambda_1(\Pi_\nu).
  $$
  To put it differently,
  \begin{equation}\label{E:inverse}
    \cE_\rho=\Xi^{-1}\left(\Delta\right).
  \end{equation}
\end{definition}

One notices that the restriction of the mapping $\Xi$ to $\cE_\rho$ is essentially the functional $\Lambda$ of the Introduction. More precisely,
$$
\Xi (\Pi)=(\Lambda(\Pi),\dots,\Lambda(\Pi)).
$$

We are ready now to prove the Proposition \ref{P:manifold}.

%%%%%%%%%%%%%%%%%%%%%%%%%%%%%%%
\subsection{Proof of the Proposition \ref{P:manifold}}
%%%%%%%%%%%%%%%%%%%%%%%%%%%%

We prove first the following auxiliary result, which immediately implies the first statement of the Proposition \ref{P:manifold}.

\begin{lemma}\label{L:smooth}
For a small $\rho>0$, the mapping $\Xi:\cP_\rho\to \R^\nu$ is $C^\infty$.
\end{lemma}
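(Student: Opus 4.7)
The plan is to reduce the eigenvalue problem on the moving sub-domain $\Pi_j = G_f(P_j)$ to a family of eigenvalue problems on the fixed sub-domain $P_j$ with smoothly $f$-dependent coefficients, and then to invoke standard perturbation theory for simple eigenvalues of a parameter family of self-adjoint forms.

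First, I would establish the smooth dependence of the diffeomorphism on $f$. The choice $s > (d+4)/2$, combined with Sobolev embedding and the continuity of the extension operator $E$ from \eqref{E:extension}, implies that $E(f)$ lies in $C^2(\overline{U})$ with norm controlled linearly by $\|f\|_{\cF}$. Hence $f \mapsto E(f)M$ is a bounded linear, hence $C^\infty$-smooth, map from $\cF$ into $C^2$ vector fields. Standard results on smooth dependence of ODE flows on the vector field then imply that, for $\rho$ small enough, the time-one map $G_f$ and its inverse $G_f^{-1}$ depend $C^\infty$-smoothly on $f$ as maps into $C^2(\overline{\Omega},\overline{\Omega})$. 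In particular, $DG_f$ and $(DG_f)^{-1}$ depend smoothly on $f$ as $C^1(\overline{\Omega})$-valued functions.

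Second, I would pull back the Dirichlet quadratic form on $\Pi_j$ to the fixed domain $P_j$ via $y = G_f(x)$. The form on $\Pi_j$ becomes
\[
 \tilde h_f[u,u] = \int_{P_j}\bigl(A_f(x)\nabla u\cdot\nabla u + V_f(x)|u(x)|^2\bigr)\, J_f(x)\, dx, \qquad u\in H^1_0(P_j),
\]
with $A_f = (DG_f)^{-1}(DG_f)^{-T}$, $V_f = V\circ G_f$, and $J_f = |\det DG_f|$, paired with the weighted $L^2$ inner product of density $J_f$. By the previous paragraph, each of $A_f,V_f,J_f$ depends $C^\infty$-smoothly on $f$ as a $C^1(\overline{P_j})$-valued function, and $J_f$ stays uniformly positive for small $\rho$. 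It follows that $\tilde h_f$ varies $C^\infty$-smoothly in $f$ as a continuous symmetric bilinear form on $H^1_0(P_j)$, and the weighted $L^2$-inner product varies smoothly as a bilinear form on $L^2(P_j)$. The associated self-adjoint operator $\tilde H_f$ on the weighted $L^2$-space is unitarily equivalent to $H(\Pi_j)$ and therefore isospectral to it.

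Third, since $\lambda_1(P_j)$ is a simple, isolated eigenvalue of $\tilde H_0 = H(P_j)$ with compact resolvent, the standard perturbation theory for smooth families of self-adjoint forms (e.g., Kato, \emph{Perturbation Theory for Linear Operators}, Ch.~VII) applies. It represents the associated spectral projector as a contour integral of the resolvent, which depends smoothly on $f$, and hence yields $C^\infty$-smoothness of $\lambda_1(\tilde H_f) = \lambda_1(\Pi_j)$ near $f=0$. Applying this argument to each coordinate $j=1,\dots,\nu$ gives the $C^\infty$-smoothness of $\Xi$. The main technical obstacle is the bookkeeping of Sobolev regularity: one must verify that $s$ is large enough for the pulled-back coefficients $A_f,V_f,J_f$ to inherit enough regularity for the form-perturbation argument, and that $f \mapsto G_f$ is genuinely smooth, not merely continuous, as a map into the relevant space of coefficients. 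The choice $s>(d+4)/2$, together with the half-derivative gain built into $E$, is precisely what provides this margin.
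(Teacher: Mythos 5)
Your proposal is correct and follows essentially the same route as the paper: transplant the problem to the fixed sub-domain via the ($f$-smoothly dependent) diffeomorphism, obtain an operator with coefficients depending smoothly on $f$, and apply standard perturbation theory for the simple ground-state eigenvalue. Your version is in fact slightly more careful in tracking the Jacobian weight and in using the flow $G_f$ (which preserves $\partial\Omega$) uniformly, so that the paper's separate treatment of sub-domains touching the boundary is absorbed into a single argument.
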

\begin{proof}
It is sufficient to prove that for any $j$, the mapping
$$
f\to \lambda_1(G_f(P_j))
$$
is smooth as a mapping from the ball $\cB_\rho$ to $\R$, if $\rho$ is sufficiently small. Thus, one can restrict attention to a single sub-domain $D$. The often employed in such circumstances idea is to replace domain dependence with varying the coefficients of the differential operator in a fixed domain. Then the smooth dependence of $\lambda_1$ becomes a standard perturbation theory result (e.g., \cite{Gar_Sch_jam53,Hil_12}).

Let us consider first the case of a sub-domain $D$ that does not touch the boundary. Consider the mapping
$$
\Phi_f: x\to y:=x+E(f)(x)M(x)
$$
of the domain $D$ into $\Omega$. For sufficiently small $\rho>0$, it is a $C^2$-diffeomorphism of $D$ onto a ``nearby'' sub-domain $D^*$ of $\Omega$. The quadratic form of the operator $H(D^*)$ is given as
$$
\int\limits_{D^*}\left(\left|\frac{\partial u}{\partial y}\right|^2+V(y)|u(y)|^2\right)dy.
$$
Changing variables back from $y$ to $x$, we arrive at an operator
$H_f(D)$ in the fixed spatial domain $D$, but with variable
coefficients now:
$$
(H_f u)(x)=\nabla\cdot A_f(x)\nabla u(x) +V_f(x)u(x),
$$
where the matrix-valued function $A_f(x)$ is of class $C^1$, $f\to
A_f$ is a $C^\infty$-mapping from $H^s$ to the space of $C^1$-matrix
functions, and $f\to V_f$ is a $C^\infty$-mapping from $H^s$ to $C^2$.

We have thus replaced the domain dependence with the smooth dependence
of the coefficients of the operator. The operator $H_f$ acts
continuously from $H^2(\Omega)$ to $L^2(\Omega)$ and for $f=0$
coincides with $H(D)$. Moreover, the mapping $A_f\to H_f$ is a
continuous linear mapping from the space of $C^1$-matrix functions to
the space of bounded operators from $H^2(\Omega)$ to
$L^2(\Omega)$. Thus, for a sufficiently small $\rho$ we get a smooth
family of Fredholm operators between the aforementioned spaces. Due to
the simplicity of $\lambda_1(D)$, the standard perturbation theory
shows that $\lambda_1(H_f)$ depends smoothly on $f$, for a
sufficiently small radius $\rho$.

A similar consideration works when $D$ reaches the boundary of $\Omega$, i.e. at least one of the boundaries $B_l$ is involved. Without loss of generality, we can assume that only one such $B_l$ is involved. Introducing an appropriate smooth coordinate change, one can reduce consideration to the cylinder $B\times(-\varepsilon,\varepsilon)$ for small $\varepsilon>0$, with $\partial B\times (-\varepsilon,\varepsilon)$ as the corresponding part of $\partial\Omega$. Then the same reduction to a fixed domain but varying operator as before is possible, which again implies smooth dependence of $\lambda_1$ on $f$.
\end{proof}

Let us now address the second statement of the Proposition, that
$\cE_\rho$ is a smooth sub-manifold of $\cP_\rho$ of co-dimension
$\nu-1$. We will employ for this purpose the formula (\ref{E:inverse})
in conjunction with the domain variation formula, equation
\eqref{E:Mhadamard}, and a transversality theorem.

According to Lemma \ref{L:smooth}, the mapping $\Xi: \cP_\rho\to \R^{\nu}$ is a smooth mapping of Banach manifolds. The pre-image of the diagonal $\Delta\subset\R^{\nu}$ coincides, according to (\ref{E:inverse}), with $\cE_\rho$. We would like to know whether this pre-image is a smooth sub-manifold, and of what co-dimension. This is exactly the question tackled by the transversality theorems. Namely, if we can show that the mapping $\Xi$ is \term{transversal} to the diagonal one-dimensional sub-manifold $\Delta$ of $\R^{\nu}$, this will prove that the pre-image of $\Delta$ is a smooth sub-manifold of co-dimension $\nu-1$.

\begin{definition} (e.g.,
  \cite{Lan_difman02,Abr_62,Bour_man71})\indent The mapping $\Xi:
  \cP_\rho\to \R^{\nu}$ is \textbf{transversal} to $\Delta$, if at any
  point $\zeta = \Xi(v)$, $v\in\cP_\rho$, that belongs to $\Delta$, the
  vector sum of the tangent space $T_\zeta \Delta$ to $\Delta$ at
  $\zeta$ and of the range $D\Xi(T_v\cP_\rho)$ of the differential
  $D\Xi$ on the tangent space $T_v\cP_\rho$ is the whole space
  $\R^{\nu}$.
\end{definition}

\begin{theorem}(e.g., \cite[Sect. 3, Theorem 2]{Abr_62} or
  \cite[Sect. 5.11.7]{Bour_man71})\\
  If $\Xi$ is transversal to $\Delta$, then
  $\cE_\rho=\Xi^{-1}(\Delta)$ is a smooth sub-manifold of $\cP_\rho$
  of co-dimension $\nu-1$.
\end{theorem}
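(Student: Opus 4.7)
The plan is to reduce the problem to a single finite-dimensional submersion statement and invoke the standard Banach-manifold implicit function theorem, using the transversality hypothesis only to extract surjectivity.

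First, I would replace the diagonal $\Delta\subset\R^\nu$ by a linear projection whose kernel is exactly $\Delta$. Take the bounded linear map $\pi:\R^\nu\to\R^{\nu-1}$ defined by $\pi(x_1,\dots,x_\nu):=(x_1-x_\nu,\dots,x_{\nu-1}-x_\nu)$; then $\pi$ is surjective with $\ker\pi=\Delta$, and hence $\cE_\rho=\Xi^{-1}(\Delta)=\Phi^{-1}(0)$, where $\Phi:=\pi\circ\Xi:\cP_\rho\to\R^{\nu-1}$. By Lemma~\ref{L:smooth} together with smoothness of $\pi$, the composite $\Phi$ is $C^\infty$ between the Hilbert manifold $\cP_\rho$ and $\R^{\nu-1}$.

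Second, I would translate transversality of $\Xi$ to $\Delta$ into the statement that $0$ is a regular value of $\Phi$. At any $v\in\cE_\rho$, transversality gives $T_{\Xi(v)}\Delta + D\Xi(v)(T_v\cP_\rho)=\R^\nu$. Applying $\pi$ to both sides and using $\pi(T_{\Xi(v)}\Delta)=\pi(\Delta)=\{0\}$ together with surjectivity of $\pi$ on $\R^\nu$, I obtain $D\Phi(v)(T_v\cP_\rho)=\R^{\nu-1}$; that is, $D\Phi(v)$ is surjective. Because the codomain is finite-dimensional, the kernel of $D\Phi(v)$ is automatically topologically complemented in $T_v\cP_\rho$ (for instance, by its orthogonal complement in the Hilbert tangent space).

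Third, the Banach-manifold implicit function / submersion theorem cited in the statement (e.g.\ \cite[Sect.~3]{Abr_62} or \cite[Sect.~5.11.7]{Bour_man71}) now applies: around each $v\in\cE_\rho$ there exist local charts on $\cP_\rho$ in which $\Phi$ becomes the linear projection onto the last $\nu-1$ coordinates. Consequently $\cE_\rho=\Phi^{-1}(0)$ is locally the graph of a smooth map from a tangent subspace of codimension $\nu-1$ into $\R^{\nu-1}$, and is therefore an embedded $C^\infty$-submanifold of $\cP_\rho$ of codimension $\nu-1$. This is essentially a bookkeeping argument rather than a substantive proof; the only point requiring attention is the splitting of $\ker D\Phi(v)$ needed to invoke the Banach version of the implicit function theorem, and that splitting is automatic whenever the target is finite-dimensional, as just noted.
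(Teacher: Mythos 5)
Your proof is correct. Note that the paper does not prove this statement at all: it is quoted as a known theorem with references to Abraham's notes and Bourbaki, and the paper's surrounding work goes into verifying the transversality \emph{hypothesis}, not the theorem itself. Your argument is precisely the standard derivation found in such references: replace the diagonal by the zero set of the linear projection $\pi(x)=(x_1-x_\nu,\dots,x_{\nu-1}-x_\nu)$, observe that transversality of $\Xi$ to $\Delta$ is equivalent to surjectivity of $D(\pi\circ\Xi)(v)$ at every $v\in\Xi^{-1}(\Delta)$, and apply the regular-value form of the implicit function theorem on Banach (here Hilbert) manifolds. You correctly identify the one genuinely infinite-dimensional point --- that $\ker D\Phi(v)$ must admit a closed complement for the Banach submersion theorem to apply --- and correctly dispose of it: the kernel has finite codimension and $T_v\cP_\rho$ is a Hilbert space, so the orthogonal complement splits it. (This is also why the paper's stated choice of the Sobolev/Hilbert setting, remarked upon in its final section, is convenient.) So there is no gap; your write-up is a legitimate self-contained proof of the cited result, in the same spirit as the sources the paper points to.
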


Thus, to finish the proof of the second statement of Proposition
\ref{P:manifold}, it only remains to prove the transversality of the
mapping $\Xi$ to the diagonal $\Delta$. The Rayleigh-Hadamard domain
variation formulas are helpful here.

The tangent space to the diagonal is spanned by the vector $(1, 1,
\ldots, 1)$.  We will demonstrate that the range of $D\Xi$ contains a
subspace of dimension at least $\nu-1$.  If the dimension is $\nu$,
there is nothing further to prove, but if the dimension is $\nu-1$ we
will show that the vector $(1, 1, \ldots, 1)$ \emph{does not} belong
to the subspace.

Consider the partition graph $\Gamma$ that corresponds to an
(automatically generic) $\nu$-partition $P\in\cP_\rho$. The vertices
of the graph correspond to the sub-domains $P_j$ and the edges to the
interfaces $C_j, B_l$. We will identify the target space $\R^\nu$ of
the mapping $\Xi$ with the space of real valued functions on the set
$V$ of vertices of the graph $\Gamma$. Consider a pair of adjacent
sub-domains $P_i$ and $P_j$ with the common part of their boundary $S$
(one of $C_j, B_l$). We consider the corresponding vertices $v_i, v_j$
and the edge $s$ of $\Gamma$.  We restrict our attention to functions
$f\in\cF$ that are non-zero on $S$ only and find the corresponding
directional (G\^{a}teaux) derivative of $\Xi$ at $P$ in the direction
of $f$.  Equation (\ref{E:Mhadamard}) shows that the only non-zero components of this
derivative correspond to vertices $v_i$ and $v_j$.
Since $M(x)$ is non-tangential to $S$, $M(x)\cdot N(x)$ is
sign-definite.  Choosing $f$ of the same sign we get
$$
D_f \Xi_i = -\int_S \left(\frac{\partial \psi_1(P_i)}{\partial
    n}\right)^2 f(x) M(x)\cdot N(x)dx \ < \  0
$$
and, similarly, $D_f \Xi_j > 0$.
% $$
% D_f \Xi_j = \int_S \left(\frac{\partial \psi (P_j)}{\partial
%     n}\right)^2 f(x) M(x)\cdot N(x)dx \ > \  0.
% $$
Here we assumed that the normal $N(x)$ is directed outward with respect
to $P_i$ and, correspondingly, inward with respect to $P_j$.  We also
used the fact that, due to the standard uniqueness theorems, ${\partial \psi_1(P_i)}/{\partial
  n}$ is not everywhere zero.

Repeating this procedure for every pair of adjacent sub-domains we
arrive at a collection of vectors, one for each edge of the partition
graph, that have two non-zero components of the opposite sign each.
To characterize the space spanned by these vectors, we arrange them as
rows of a matrix and find its kernel.  Due to the connectedness of the
graph, the kernel is at most one-dimensional.  If the kernel is empty,
the vectors we found span all of $\R^\nu$.  If the kernel is spanned
by a vector $u$ (this is the case if the domain $\Omega$ is simply
connected, as this implies that the partition graph is a tree), then
$u$ must have all components of the same sign.  The vector $u$ is
therefore not orthogonal to the vector $(1,1,\ldots, 1)$, and the
latter vector complements the derivative vectors to span $\R^\nu$.

This finishes the proof of transversality and thus of the second
statement of the proposition. Since the first two claims of the
proposition imply the third one, the proof is completed.

%%%%%%%%%%%%%%%%
\section{Proof of Theorem \ref{T:critical}}\label{S:critical}
%%%%%%%%%%%%%%%%

%%%%%%%%%%%%%%%%%%%%%%%%%
\subsection{Proof of the equivalence
  $(\ref{I:critical:nodal})\Leftrightarrow
  (\ref{I:critical:Lambda_c})$.}
%%%%%%%%%%%%%%%%%%%%%%%%%%%

If $P$ is the nodal partition of a real-valued eigenfunction $\psi$, then, as we
have already mentioned before (Proposition \ref{P:ground_states}), the
restrictions of $\psi$ to the nodal domains are proportional to the
groundstates in these domains.  Denote these proportionality constants
by $a_j$.  Since the eigenfunction $\psi$ is continuously
differentiable, the groundstates $\psi_1(P_j)$ scaled with the
corresponding factors $a_j$ have matching normal derivatives at the
common boundaries:
\begin{equation}
  \label{E:derivatives_match}
  a_i\left.\frac{\partial \psi_1(P_i)}{\partial n} \right|_{\ S}
  = a_j\left.\frac{\partial \psi_1(P_j)}{\partial n}\right|_{\ S},
\end{equation}
where sub-domains $P_i$ and $P_j$ have the common boundary $S$, and
$n$ is a normal vector to $S$.

Now let $c_k=a_k^2$ and consider the G\^{a}teaux derivative of the
functional $\Lambda_c$ in the direction $f$ that is non-zero only on a
single boundary between the sub-domains $P_i$ and $P_j$.  Since the
only affected terms in $\Lambda_c$ are $a_i^2\lambda(P_i)$ and
$a_j^2\lambda(P_j)$, the derivative is
\begin{equation}\label{E:difference}
  \int_S \left(a_j^2 \left(\frac{\partial \psi_1(P_j)}{\partial
      n}\right)^2 -a_i^2 \left(\frac{\partial \psi_1(P_i)}{\partial
      n}\right)^2\right)f(x) M(x)\cdot N(x)dS,
\end{equation}
where we applied the Rayleigh-Hadamard formula, equation~\eqref{E:Mhadamard}.  The
difference in signs arises since $N(x)$ points outward with respect to
$P_i$ but inward with respect to $P_j$.  Now we observe that, due
to (\ref{E:derivatives_match}), the integrand is identically equal to zero and thus the G\^{a}teaux derivative in the direction of $f$ is equal to zero. The same is obviously true for arbitrary variations $f$, involving any number of boundaries. Thus, the nodal partition is a critical
point of the functional $\Lambda_c$.

Conversely, if a partition $P_\rho$ is a critical point of
$\Lambda_c$, we get that the G\^{a}teaux derivative of $\Lambda_c$ is zero in any direction $f(x)$.  This implies the equality
\begin{equation*}
  c_i\left(\frac{\partial \psi_1(P_i)}{\partial n}\right)^2
  = c_j\left(\frac{\partial \psi_1(P_j)}{\partial n}\right)^2
\end{equation*}
on the common boundary $S$ of any two neighboring domains $P_i$ and
$P_j$.  Setting $\alpha_k = \pm \sqrt{c_k}$ and choosing the signs so
that any two neighboring domains have different signs (possible due to
bipartiteness) ensures that (\ref{E:derivatives_match}) is satisfied.
Then the function $\psi$ defined by
\begin{equation*}
  \psi\mid_{P_k} = a_k \psi_1(P_k)
\end{equation*}
is an eigenstate of $H$.

\subsection{Proof of the equivalence $(\ref{I:critical:Lambda_c})
  \Leftrightarrow (\ref{I:ctitical:on_equipartns})$}

If $P$ is a critical point on $\cP_\rho$ of the functional
$\Lambda_c$, then the restriction of $\Lambda_c$ to $\cE_\rho$ is a
critical point on $\cE_\rho$.  But on $\cE_\rho$ any functional
$\Lambda_c$ coincides with $\Lambda$.

Conversely, assume that $P$ is a critical point of $\Lambda$ on
$\cE_\rho$. We can extend the functional $\Lambda$ to the whole
$\cP_\rho$ as $\lambda_1(P_1)$. Since $\cE_\rho$ can be given by the
smooth relations $\lambda_1(P_1)-\lambda_1(P_j)=0$ for $j=2,3,\dots
,\nu$, the Lagrange multiplier method implies that $P$ must be a
critical point of a non-trivial linear combination $\Lambda_b:=\sum
b_j\lambda_1(P_j)$.  All $b_j$ are of the same sign: otherwise there
are two neighboring domains with $b_j$ of different signs and the
variation of $\Lambda_b$ with respect to the boundary between the two
domains cannot be zero for a sign-definite $f(x)$ (see equation
(\ref{E:difference})). Thus the vector of coefficients $b_j$ can be
normalized to be a unit simplex vector.  This finishes the proof of
Theorem \ref{T:critical}.

%%%%%%%%%%%%%%%%
\section{Proof of Theorem \ref{T:morse}}\label{S:morse}
%%%%%%%%%%%%%%%%

Let us present first the strategy of the proof. As we have already
seen, it is sometimes useful to play with different extensions of the
functional $\Lambda$ from the (local patch of the) space $\cE_\rho$ of
equipartitions to a larger manifold. While previously it was the
(local patch of the) space of all partitions, now we need some further
extension. Indeed, we would like to compare somehow the nodal count
$\nu$ (which is fixed) with the consecutive number $n$ of an
eigenfunction $\psi_n$. It is hard to observe where the information
about $n$ is hidden in the spaces of partitions themselves. On the
other hand, the quadratic form (or Rayleigh quotient) contains this
information.  To use it we will extend the functional $\Lambda$ to a
larger space that is a functional space, not just a set of domain
partitions. Then we will have to restrict back in order to compare the
Morse indices of $\Lambda$ and of its extension.

Before implementing this program, in the following sub-sections we
start proving some auxiliary statements that will come handy later on.

%%%%%%%%%%%%%%%%%
\subsection{Critical points on direct sums of spaces}
%%%%%%%%%%%%%%%%%%%%

\begin{theorem}
  \label{T: reduction}
  Let $X=Y \bigoplus Y'$ be a direct decomposition of a Banach space.  Let also
  $f:X\to \R$ be a smooth functional such that $(0,0)\in X$ is its
  critical point of Morse index $m$.

  If for any $y$ in a neighborhood of zero in $Y$, the point
  $(y,0)$ is a critical point of $f$ over the affine subspace
  $\{y\}\times Y'$, then the Hessian $F_2$ of $f$ at the origin, as
  a quadratic form in $X$, is reduced by the decomposition $X=Y\bigoplus
  Y'$.

  In particular, let $Y$ be the locus of minima of $f$ over the affine
  subspaces $\{y\}\times Y'$, i.e.
  \begin{equation}
    \label{eq:locus_min}
    (y,0) = \arg \min_{y' \in Y'} f(y,y'),
  \end{equation}
  for any $y$ in a neighborhood of zero in $Y$.  Then the Morse index
  of $0$ as a critical point of the restriction $f|_{Y}$ is equal to
  $m$ (i.e., the same as the Morse index of this point on the whole
  neighborhood of zero in $X$).

  Finally, if $(0,0)$ is a non-degenerate critical point of $f$ on
  $X$, then $0$ is non-degenerate as a critical point of $f|_{Y}$.
\end{theorem}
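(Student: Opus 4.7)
The approach is to reduce everything to one observation: the hypothesis that $(y,0)$ is a critical point for every small $y \in Y$ forces the Hessian $F_2$ of $f$ at the origin to decouple along the decomposition $X = Y \oplus Y'$, after which the three claims become statements about block-diagonal quadratic forms.

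For the first claim, I would rewrite the critical-point hypothesis as the identity $D_{y'} f(y, 0) = 0$ in $(Y')^*$, valid for all $y$ in a neighborhood of $0$ in $Y$, where $D_{y'}$ denotes the partial Fr\'echet derivative in the $Y'$ direction. Differentiating this identity once more in $y$ at the origin (which is legitimate since $f$ is smooth and the partial $D_{y'} f$ depends smoothly on the base point) yields $D_y D_{y'} f(0,0) = 0$. By symmetry of the second derivative, this is exactly the mixed block of the bilinear Hessian, so $F_2\bigl((y,0), (0,y')\bigr) = 0$ for all $y \in Y$ and $y' \in Y'$. Setting $A := F_2|_{Y \times Y}$ and $B := F_2|_{Y' \times Y'}$, we obtain the desired block-diagonal decomposition $F_2 = A \oplus B$.

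For the Morse-index statement I would use additivity across the block decomposition: any subspace of $X$ on which $F_2$ is negative definite can be projected onto $Y$ and $Y'$ independently with $F_2$ remaining negative definite on the pieces, and conversely the direct sum of negative-definite subspaces of $A$ and of $B$ remains negative definite under $F_2$. This gives $m = \mathrm{ind}(A) + \mathrm{ind}(B)$. The minimum hypothesis, specialized at $y = 0$, says that the origin is a minimum of $f$ on $\{0\} \times Y'$, so $B$ is positive semi-definite and $\mathrm{ind}(B) = 0$. Hence $\mathrm{ind}(A) = m$, and since $A$ is the Hessian of $f|_Y$ at the origin, this is exactly the Morse index of $f|_Y$ at $0$. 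The non-degeneracy claim follows immediately: a null vector $y \in Y$ of $A$ would produce the null vector $(y,0) \in X$ of $F_2$, contradicting non-degeneracy of $F_2$ on $X$.

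The main obstacle I anticipate is technical rather than conceptual: the differentiation step that yields block-diagonality needs to be justified carefully in the Banach setting, and additivity of the Morse index across the block decomposition, while clear in finite dimensions, merits a short verification in the range of indices we care about (which are finite under the genericity hypotheses). It is also worth noting that the minimum assumption in the second statement cannot be weakened to a mere critical-point condition on the slices $\{y\} \times Y'$: without it, $B$ could contribute additional negative directions, and the Morse index on $Y$ would in general be strictly smaller than that on $X$.
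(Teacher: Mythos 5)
Your approach coincides with the paper's: both show that the mixed block $C$ of the Hessian vanishes by differentiating the criticality of $(y,0)$ along $\{y\}\times Y'$ with respect to $y$ (the paper phrases this via the second-order Taylor expansion and a G\^{a}teaux derivative, but it is the same computation), and then read off the index statements from the block-diagonal form $F_2=A\oplus B$ with $B\ge 0$. One intermediate step, however, is false as stated: your justification of $m=\mathrm{ind}(A)+\mathrm{ind}(B)$ claims that a subspace on which $F_2$ is negative definite projects onto $Y$ and $Y'$ \emph{independently} as negative-definite subspaces of $A$ and $B$. It does not: take $F_2(y,y')=-2y^2+(y')^2$ on $\R^2$ and the line spanned by $(1,1)$; $F_2$ is negative there, but its projection to $Y'$ is a positive direction for $B$. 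Fortunately you do not need general additivity. Once the minimum hypothesis gives $B\ge 0$, project a negative subspace $L$ of $F_2$ onto $Y$ \emph{only}: the projection is injective on $L$ (a nonzero vector $(0,y')\in L$ would satisfy $F_2((0,y'),(0,y'))=B(y',y')\ge0$), and $A(\pi_Y v,\pi_Y v)=F_2(v,v)-B(\pi_{Y'}v,\pi_{Y'}v)\le F_2(v,v)<0$, so $\mathrm{ind}(A)=\mathrm{ind}(F_2)=m$. This is precisely the paper's remark that the negative subspaces of $F_2$ and $A$ coincide after the natural projection.

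A second, smaller issue concerns non-degeneracy. The paper defines a non-degenerate critical point by the equality $\mu=\mu^0$, where $\mu^0$ is the maximal dimension of a subspace on which the Hessian is non-positive --- not by the absence of null vectors --- so your ``null vector of $A$ yields a null vector of $F_2$'' argument addresses a different notion. The argument matching the paper's definition is the inequality chain: if $\mu^0(A)>\mu(A)$, then $\mu^0(F_2)\ge\mu^0(A)>\mu(A)=\mu(F_2)$ (using the already-established equality of Morse indices), contradicting $\mu(F_2)=\mu^0(F_2)$. Both repairs are local; the overall structure of your proof is sound and is the same as the paper's.
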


\begin{proof}
  We can assume, without loss of generality, that $f(0,0)=0$. Using
  this and the condition of the criticality of the origin, the Taylor
  formula of the second order for $f$ on $X$ near the origin is
  $$
  f(y,y') = A(y,y) + B(y',y') + C(y,y')  + \mbox{ higher order terms}.
  $$
  Here $A$ is a quadratic form in $Y$, $B$ is a
  quadratic form in $Y'$ and $C$ is a bilinear form acting on $Y\times
  Y'$.  We will now take the G\^ateaux derivative of $f(y,y')$ in the
  direction $z' \in Y'$ and evaluate it at $(y, 0)$.

  The derivative of $A(y,y)$ is zero (since $A$ does not depend on
  $y'$).  The derivative of $B(y',y')$ is $B(z', y') + B(y', z')$,
  which vanishes after substituting $y'=0$.  Thus we find
  \begin{equation*}
    D_{(0,z')}f(y,0) = C(y,z'),
  \end{equation*}
  which must be zero since $(y,0)$ is a critical point over
  $\{y\}\times Y'$.  Since $z'$ and $y$ are arbitrary (provided $y$ is
  sufficiently close to 0), the bilinear form $C$ is identically
  zero.  The Hessian is thus
  \begin{equation*}
    F_2 = A(y,y) + B(y',y').
  \end{equation*}
  This proves the first statement of the Theorem.

  If condition~\eqref{eq:locus_min} is fulfilled, then $B(y',y')$ is
  non-negative definite and the negative subspaces of $F_2$ and $A$
  coincide (after the natural projection).

  Finally, if $(0,0)$ is non-degenerate, it implies that $\mu^0(A) =
  \mu(A)$, where $\mu$ denotes the Morse index.  Indeed, otherwise
  $\mu^0(F_2) \geq \mu^0(A) > \mu(A) = \mu(F_2)$ which contradicts
  the non-degeneracy of $(0,0)$.
\end{proof}

\begin{remark}
  It is obvious from the proof of the theorem that its second
  statement can be generalized in the following manner: if $Y$ is the
  locus of \emph{critical points} of $f$ of index $m'$ (locally
  independent of $y$) then the Morse index of the restriction $f|_{Y}$
  is equal to $m-m'$.  We will only need $m'=0$ in the present
  manuscript but the more general version becomes useful in other
  contexts, such as \cite{Ber_prep11}.
\end{remark}

%%%%%%%%%%%%%%
\subsection{Some objects needed for the proof}
%%%%%%%%%%%%%%

Due to the local character of Theorem~\ref{T:morse}, all the
constructions below are needed only locally, near a generic
eigenfunction $\psi_n$ indicated in the statement of the theorem, and
correspondingly near its nodal partition.  The basic notions and facts
concerning finite- or infinite- dimensional vector-bundles that we use
below can be found in many standard sources on topology (e.g.,
\cite{Atiyah,Husemoller}) or in the survey \cite{ZaiKreKucPan_75},
where such bundles are studied in relation to the operator theory.

We will be considering again the (local) manifold $\cP_\rho$ of
partitions ``close'' to the nodal partition $P(\psi_n)$ and its
sub-manifold $\cE_\rho$ of codimension $\nu-1$ that consists of
equipartitions only.

\begin{definition}
  \label{def:bundle_B}
  In the trivial bundle $\cP_\rho\times H^1_0(\Omega)$ over
  $\cP_\rho$, we consider a fibered subset $B$ that has the fiber over
  a partition $P$ consisting only of functions vanishing on the
  partition's interfaces $N \cup \partial \Omega$.  In other words,
  this fiber is $H^1_P:=\bigoplus_j H^1_0(P_j)$.
\end{definition}

\begin{lemma} \label{L:bundleB} $B$ is a smooth locally trivial
  vector sub-bundle of the trivial bundle
  $$\cP_\rho\times H^1_0(\Omega)\mapsto \cP_\rho.$$
\end{lemma}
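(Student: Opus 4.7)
The plan is to construct an explicit global trivialization of $B$ over the coordinate patch $\cP_\rho$ by pulling functions back along the flow diffeomorphisms $G_f$ already built in Section~\ref{S:manifolds}. Since $\cP_\rho$ is identified with the ball $\cB_\rho\subset\cF$ and each $G_f$ is a diffeomorphism of $\overline{\Omega}$ that preserves $\partial\Omega$ and carries the interface set of $P$ bijectively onto that of $G_fP$, the natural candidate is the map
$$
\Psi : \cB_\rho \times H^1_P \longrightarrow B, \qquad (f, u)\longmapsto \bigl(G_f P,\, u\circ G_f^{-1}\bigr).
$$
If $\Psi$ is a fiber-linear $C^\infty$-diffeomorphism, then $B$ is smoothly trivial over the whole of $\cP_\rho$ and, in particular, a smooth locally trivial sub-bundle.

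I would split the verification into three steps, carried out in this order. First, fix $f$ and observe that, because $G_f$ is a bi-Lipschitz bijection sending the interfaces of $P$ onto the interfaces of $G_fP$ and mapping $\partial\Omega$ to itself, pullback by $G_f^{-1}$ is a bounded linear isomorphism from $H^1_P$ onto $H^1_{G_fP}$, with inverse given by pullback along $G_f$; this produces the fiber-wise linear isomorphism. Second, prove that $f\mapsto G_f^{\pm 1}$ is $C^\infty$ from $\cB_\rho$ into the space of $C^2$-diffeomorphisms of $\overline{\Omega}$. The choice $s>(d+4)/2$ combined with boundedness of the extension operator $E$ places $E(f)M$ into the Banach space of $C^2$-vector fields on $\overline{\Omega}$ with smooth dependence on $f$; the classical smooth-dependence-on-parameters theorem for ODE flows then supplies the conclusion. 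Third, show that the composition map $(\phi, u)\mapsto u\circ\phi$ is $C^\infty$ from a neighborhood of $(\mathrm{id},0)$ in $C^2(\overline{\Omega};\overline{\Omega})\times H^1_0(\Omega)$ into $H^1_0(\Omega)$, and similarly for pullback by $\phi^{-1}$. Chaining Steps~2 and~3 will yield smoothness of $\Psi$ and $\Psi^{-1}$.

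I expect Step~3 to be the main obstacle. Linearity in $u$ and boundedness with operator norm controlled by the $C^1$-norms of $\phi$ and $\phi^{-1}$ are routine, but \emph{smooth} dependence on $\phi$ is delicate: the formal G\^ateaux derivative of $u\circ\phi$ in a direction $\xi$ equals $(\nabla u\circ\phi)\cdot\xi$, so the Taylor remainder of composition must be estimated in $H^1$ rather than in $L^2$. This is precisely the reason $s>(d+4)/2$ was imposed in Section~\ref{S:manifolds}: it guarantees that $G_f$ has two classical derivatives depending smoothly on $f$, which is enough to control the composition operator in the operator norm on $H^1_0(\Omega)$. Once Step~3 is in hand, $\Psi$ is a fiber-preserving $C^\infty$-diffeomorphism whose image is exactly $B$, which is the smooth locally trivial vector sub-bundle structure asserted in the lemma.
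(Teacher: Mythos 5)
Your proposal is correct and follows essentially the same route as the paper: the paper's proof also trivializes $B$ by using the smooth family of diffeomorphisms $G_f$ to transport the fibers $H^1_{G_fP}$ back to the fixed fiber $H^1_P$, reducing $B$ to a trivial sub-bundle. Your write-up simply makes explicit the smoothness verifications (smooth dependence of $G_f^{\pm 1}$ on $f$ and smoothness of the composition operator on $H^1_0(\Omega)$) that the paper leaves implicit by referring to the argument of Lemma~\ref{L:smooth}.
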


\begin{proof}
  The proof follows the same line as the one of Lemma
  \ref{L:smooth}. Namely, the dependence on the partition $P$ is
  replaced, using a smooth family of diffeomorphisms, with a fixed
  partition $\Pi$.  The corresponding change of variables in the
  functions from $H^1_P$ maps them to $H^1_\Pi$.  Then $B$ becomes
  just the trivial sub-bundle $\cP_\rho \times \left(\bigoplus_j
    H^1_0(\Pi_j)\right)$ in $\cP_\rho\times H^1_0(\Omega)\mapsto
  \cP_\rho$. Inverting the diffeomorphisms provides a smooth
  trivialization of $B$, which proves the lemma.
\end{proof}

\begin{definition}We denote by $SB$ the locally-trivial bundle of the
  unit (in $L_2$-norm) spheres of the fibers of $B$.

  The restrictions of $B$ and $SB$ to $\cE_\rho$ (clearly
  locally-trivial bundles) will be denoted by $B_E$ and $SB_E$
  correspondingly.
\end{definition}

We will now restrict the bundle $B$ further.

\begin{definition}We denote by $C$ the vector bundle whose fiber over a partition $P$ consists of functions of the form $\sum_j c_j\psi_1(P_j)$, where $c_j$ are real constants and $\psi_1(P_j)$ is the normalized positive groundstate on the sub-domain $P_j$.

Correspondingly, $SC$ consists of the unit (in $L_2$-norm) spheres of $C$ and $C_E$ and $SC_E$ are restrictions of the corresponding fibered sets to $\cE_\rho$.
\end{definition}

The following lemma shows that $C$ and $SC$ are locally-trivial
bundles, and, in particular, $C$ is a locally trivial vector bundle.

\begin{lemma} \label{L:bundleC}
$C$ is a smooth $\nu$-dimensional vector sub-bundle of $B$, and thus of the trivial bundle
$$\cP_\rho\times H^1_0(\Omega)\mapsto \cP_\rho.$$
\end{lemma}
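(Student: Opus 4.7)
My plan is to exhibit, over $\cP_\rho$, a global smooth frame of $\nu$ sections of the trivial bundle $\cP_\rho\times H^1_0(\Omega)$, each lying in $B$, which spans the fiber of $C$ pointwise. This will simultaneously show that $C$ is a smooth locally (in fact globally) trivial $\nu$-dimensional vector sub-bundle and that it sits inside $B$. The candidate sections are
\begin{equation*}
  \sigma_j : P \longmapsto \widetilde{\psi_1(P_j)}, \qquad j=1,\dots,\nu,
\end{equation*}
where tilde denotes extension by zero from $P_j$ to all of $\Omega$. Since the $\sigma_j(P)$ have pairwise disjoint essential supports and each is non-zero, they are pointwise linearly independent; they also lie in $H^1_0(P_j)\hookrightarrow H^1_P$, hence in the fiber of $B$ over $P$. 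It therefore remains only to verify that each $\sigma_j$ is a smooth map $\cP_\rho\to H^1_0(\Omega)$.

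To prove smoothness of $\sigma_j$, I will reuse the device from Lemma~\ref{L:smooth}. Given the diffeomorphism $G_f$ associated to $f\in\cB_\rho$, pull $\psi_1(G_fP_j)$ back by $G_f$ to obtain a function on the fixed domain $P_j$. As shown in the proof of Lemma~\ref{L:smooth}, this reduces the eigenvalue problem on the moving domain $G_fP_j$ to the eigenvalue problem for a smoothly $f$-dependent family of uniformly elliptic operators $H_f$ on the fixed domain $P_j$, acting as bounded operators $H^2(P_j)\cap H^1_0(P_j)\to L^2(P_j)$. Because the ground eigenvalue $\lambda_1(P_j)$ is simple, the standard analytic perturbation theory of Rellich--Kato applies: there is a smooth family $\widetilde{u}_f\in H^2(P_j)\cap H^1_0(P_j)$ of eigenfunctions of $H_f$ at the simple eigenvalue branch $\lambda_1(H_f)$, which we normalize to have unit $L^2$-norm and positive leading coefficient. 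Pushing $\widetilde{u}_f$ forward by $G_f$ and renormalizing in $L^2$ yields $\psi_1(G_fP_j)\in H^1_0(G_fP_j)$; after extension by zero to $\Omega$ we obtain $\sigma_j(G_fP)\in H^1_0(\Omega)$.

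The smoothness of $f\mapsto \sigma_j(G_fP)$ into $H^1_0(\Omega)$ then follows because (i) $f\mapsto \widetilde{u}_f$ is smooth into $H^2(P_j)\cap H^1_0(P_j)$ by the perturbation theorem, (ii) the pull-back/push-forward by the $C^2$-diffeomorphism $G_f$ is smooth from $H^1_0(P_j)$ to $H^1_0(G_fP_j)$ (with uniform control because $f\in\cB_\rho$ with $\rho$ small), and (iii) extension by zero from $H^1_0(G_fP_j)$ into $H^1_0(\Omega)$ is an isometric embedding. The $L^2$-renormalization is smooth because the $L^2$ norm of $\widetilde{u}_f$ stays bounded away from zero for $f$ small. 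Thus each $\sigma_j$ is smooth, and the frame $\{\sigma_1,\dots,\sigma_\nu\}$ realizes $C$ as a smooth rank-$\nu$ trivial vector sub-bundle of $\cP_\rho\times H^1_0(\Omega)$; its fiberwise inclusion in $B$ has already been noted.

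The main technical point to be careful about is regularity in the correct topology: we need smoothness not merely of the eigenvalue (as in Lemma~\ref{L:smooth}) but of the eigenvector, and as a map into $H^1_0(\Omega)$ after extension by zero. This forces us to work in $H^2\cap H^1_0$ on the fixed model domain so that pull-back by the $C^2$-diffeomorphism preserves the $H^1_0$-regularity, and to use the simplicity of $\lambda_1(P_j)$ so that Kato's analytic perturbation theory produces a smooth branch of normalized eigenvectors rather than just a smooth spectral projector. Once this regularity bookkeeping is in place, the remainder of the argument is routine.
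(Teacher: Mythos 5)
Your proposal is correct and follows essentially the same route as the paper: reduce to a fixed domain via the family of diffeomorphisms, invoke perturbation theory at the simple ground-state eigenvalue to get a smooth branch of eigenvectors, extend by zero, and exhibit the resulting $\nu$ sections with disjoint supports as a smooth frame. Your version merely spells out the regularity bookkeeping (working in $H^2\cap H^1_0$, smoothness of push-forward and of the $L^2$-renormalization) that the paper leaves implicit.
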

\begin{proof}
  The proof follows the same line as in Lemmas \ref{L:smooth} and
  \ref{L:bundleB}. Namely, after applying a smooth family of
  diffeomorphisms, one deals with a fixed partition $\Pi$, but instead
  with the operator whose coefficients depend smoothly on $P$.
  Perturbation theory shows that the corresponding ground-state $f_j$
  in each sub-domain $\Pi_j$ depends smoothly on $P$, as a vector in
  $H^1_0(\Pi_j)$. We extend it, without changing the notation, by zero
  to the whole domain $\Omega$. Then $\{f_j\}$ is a smoothly dependent
  on $P$ frame of $\nu$ linearly independent vectors in
  $H^1_0(\Omega)$. Thus, this frame spans a smooth finite-dimensional
  vector-bundle. After applying the inverses of the diffeomorphisms,
  we get the claim of the lemma.
\end{proof}

%%%%%%%%%%%%%%%%%%%%%%%%%%%%%
\subsection{The functional $\Lambda$ and the quadratic form of $H$}
%%%%%%%%%%%%%%%%%%%%%%%%%

Consider the quadratic form on $H^1_0(\Omega)$
$$
Q[f]:=\int_\Omega \left( |\nabla f(x)|^2 + V(x) |f(x)|^2 \right) dx.
$$
It can, by restricting to each fiber, be defined as a smooth functional on the vector bundle $B$ and its sub-bundles that we considered above.

\begin{lemma}\label{L:indn-1}
  The point $(P(\psi_n),\psi_n)$ in $SB$ is a non-degenerate critical
  point of $Q$ of Morse index $\mu=n-1$.
\end{lemma}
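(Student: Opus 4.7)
The strategy is to compare the Morse data of $Q$ on $SB$ with the Morse data of $Q$ on the standard $L^2$-unit sphere $S \subset H^1_0(\Omega)$, where $\psi_n$ is the classical non-degenerate critical point of the Rayleigh quotient of index $n-1$.

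I would begin with the direct computation of the first two variations. Take a smooth curve $(P(t), f(t))$ in $SB$ with $(P(0), f(0)) = (P(\psi_n), \psi_n)$ and write $\delta f := f'(0)$. Differentiating $\|f(t)\|_{L^2}^2 = 1$ at $t=0$ yields $\langle \psi_n, \delta f\rangle = 0$, and together with $H\psi_n = \lambda_n \psi_n$ this reduces the first variation of $Q[f(t)]$ to $2\lambda_n \langle \psi_n, \delta f\rangle = 0$, establishing criticality. Continuing the Taylor expansion one more order and eliminating $f''(0)$ via the second-order part of the sphere constraint produces the Hessian
\[
F_2(\delta f, \delta f) \;=\; 2\bigl(Q[\delta f] - \lambda_n \|\delta f\|^2\bigr) \;=\; 2\langle (H - \lambda_n)\delta f, \delta f\rangle.
\]
The key feature is that $F_2$ depends only on the fiber component $\delta f$, not on the base component $\delta P$, because $Q(P,f) = Q(f)$ does not feel $P$ in the ambient trivial bundle $\cP_\rho \times H^1_0(\Omega)$.

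Next I would identify $T_{(P(\psi_n),\psi_n)}SB$ as a subspace of $\cF \oplus H^1_0(\Omega)$. Linearising the constraint that $f(t)$ vanishes on the deformed interfaces of $P(t)$ (obtained from $\cZ(\psi_n)$ by flowing along $E(\delta P)M$), in addition to the sphere constraint, gives the trace relation
\[
\delta f|_{\cZ(\psi_n)} + \bigl(\nabla\psi_n \cdot M\bigr)\,\delta P \;=\; 0.
\]
By genericity, $\nabla \psi_n$ is nowhere zero on $\cZ(\psi_n)$, and by construction $M$ is non-tangential to $\cZ(\psi_n)$, so the scalar $\nabla\psi_n \cdot M$ is nowhere zero. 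Consequently, the projection $d\pi:(\delta P,\delta f)\mapsto \delta f$ is injective on $T_{(P(\psi_n),\psi_n)}SB$.

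Finally, I would transfer the Morse data. On $T_{\psi_n}S$ the form $F_2$ diagonalises in the orthonormal eigenbasis $\{\psi_k\}$ of $H$, and simplicity of $\lambda_n$ (genericity) gives $\mu(S) = \mu^0(S) = n-1$, with negative subspace $\mathrm{span}\{\psi_k : k < n\}$. Injectivity of $d\pi$, combined with the fact that $F_2$ on $SB$ is the pull-back of $F_2$ on $S$, immediately yields $\mu(SB) \le n-1$ and $\mu^0(SB) \le n-1$. For the matching lower bound, elliptic regularity (smooth $V$ and smooth $\partial\Omega$) makes every $\psi_k$ smooth up to the boundary, so $\psi_k|_{\cZ(\psi_n)}$ is smooth; the explicit lifts
\[
\delta P_k \;:=\; -\,\frac{\psi_k|_{\cZ(\psi_n)}}{\nabla\psi_n \cdot M} \;\in\; \cF, \qquad (\delta P_k, \psi_k) \in T_{(P(\psi_n),\psi_n)}SB, \qquad k = 1, \dots, n-1,
\]
realise the full $(n-1)$-dimensional negative-definite subspace inside $T_{(P(\psi_n),\psi_n)}SB$, giving $\mu(SB) = n-1$ and $\mu^0(SB) = \mu(SB)$, i.e.\ non-degeneracy.

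The main technical obstacle is the second step: one must carefully linearise the moving-boundary condition ``$f(t)$ vanishes on the deformed interfaces of $P(t)$'' to obtain the trace relation, and then verify that the Sobolev regularity encoded in $\cF$ accommodates the lifts $\delta P_k$ used in the third step. This compatibility rests on the smoothness of the Dirichlet eigenfunctions, which is also what allows $d\pi$ to have the required image in $T_{\psi_n}S$ even though $\pi$ itself is not literally a local diffeomorphism between Sobolev-type manifolds.
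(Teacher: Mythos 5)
Your overall strategy is the same as the paper's: the Hessian of $Q$ on $SB$ factors through the tautological projection $(\delta P,\delta f)\mapsto\delta f$ onto the tangent space of the $L^2$-sphere in $H^1_0(\Omega)$, injectivity of that projection gives $\mu\le\mu^0\le n-1$, and explicit lifts of $\psi_1,\dots,\psi_{n-1}$ obtained from the linearized moving-boundary (trace) relation give $\mu\ge n-1$. The paper phrases the lifts as a decomposition of each $\psi_k$ into a ``horizontal'' plus a ``vertical'' tangent vector to $SB$, but this is the same computation as your $\delta P_k$.

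There is, however, one genuine gap, and it sits exactly at the step you flagged as the main technical obstacle. Your assertion that ``by genericity, $\nabla\psi_n$ is nowhere zero on $\cZ(\psi_n)$'' is false at the points where an interface $B_l$ meets $\partial\Omega$: there $\psi_n$ vanishes identically along $\partial\Omega$ (so all tangential derivatives vanish) and the normal derivative $\partial\psi_n/\partial n$ vanishes as well (that is precisely where the boundary piece of the nodal set sits), so $\nabla\psi_n=0$ on $\partial B_l$. Consequently the denominator $\nabla\psi_n\cdot M$ in your lift $\delta P_k=-\psi_k|_{\cZ}/(\nabla\psi_n\cdot M)$ vanishes on $\partial B_l$, and mere smoothness of $\psi_k$ up to the boundary does not put the quotient in $\cF$. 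What rescues the construction --- and what the paper's proof uses --- is the second half of genericity condition (2): zero is a regular value of $\partial\psi_n/\partial n$ on $\partial\Omega$, so $N\psi_n$ restricted to the nodal set has a simple (non-degenerate) zero along $\partial B_l$, while the numerator $\psi_k$ also vanishes on $\partial\Omega\supset\partial B_l$; the two first-order zeros cancel and the quotient is indeed in $H^s$ on each interface. You should add this cancellation argument. Note that your injectivity claim for $d\pi$ survives unchanged, since $(\nabla\psi_n\cdot M)\,\delta P=0$ on the dense complement of $\partial B_l$ in each interface still forces $\delta P\equiv 0$ by continuity, and that for purely interior partitions (all interfaces of type $C_k$) your argument is complete as written.
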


\begin{proof}It is clear that $Q$ on $H^1_0(\Omega)$ has an $n-1$-dimensional subspace on which its Hessian at $\psi_n$ is negative. Namely, this is the subspace generated by the eigenfunctions $\psi_1,...,\psi_{n-1}$. If we show that these directions are among the tangential ones to $SB$ at $(P(\psi_n),\psi_n)$, this will prove that $\mu\geq n-1$.

  Due to the locally-trivial structure of $SB$, there are two main
  ways to get tangential vectors to $SB$.  One is to vary the partition
  $P$ (which will give ``horizontal'' tangent vectors to the bundle
  $SB$).  The other is to keep the partition fixed, while varying
  the function $\psi_n$, keeping the nodal set fixed (``vertical''
  tangent vectors). The vertical tangent vectors are just arbitrary
  functions in $H^1_0(\Omega)$ that vanish on the nodal set $\cZ$ of
  the partition $P(\psi_n)$ (we have previously denoted this space
  $H^1_P$). The horizontal tangential vectors look as follows:
$$
E(f) M\psi_n=E(f) N\psi_n(x)(M(x)\cdot N(x)),
$$
where $f\in H^s$ on the nodal set of $\psi_n$ is the function defining this set's infinitesimal variation. Also, the notation $Xg$ for a vector field $X$ and a function $g$ (e.g., $M\psi_n$ and $N\psi_n$) mean the derivative of the function $g$ along the field $X$. Notice that, since $M(x)$ is non-tangential to $\cZ$, $M(x)\cdot N(x)$ is a smooth separated from zero function on the nodal set.

We now show that any $\psi_j$ with $j<n$ can be represented as a sum of vertical and horizontal vectors. We notice that the genericity condition requires in particular that zero is a regular value of the normal derivative of $\psi_n$ on $\partial\Omega$. This implies that the derivative $N \psi_n$ is a smooth function that has a non-degenerate zero at $\partial\Omega$ on the nodal set $\cZ$. Since $\psi_j$ is a smooth function vanishing on $\partial\Omega$, the function
$$
f(x):=\frac{\psi_j(x)}{(N \psi_n) M(x)\cdot N(x)}
$$
belongs to $H^s$ on the nodal set. Hence, the horizontal tangent vector in the direction of $f$
$$
h:=(N \psi_n)E(f)M(x)\cdot N(x)
$$
coincides with $\psi_j$ on the nodal set. This means that the difference $g:=\psi_j-h$ belongs to $H^1_P$, and thus is a vertical tangent vector. This shows that each eigenfunction $\psi_1,...,\psi_{n-1}$ can be represented as the sum of a vertical and horizontal vectors and thus is tangent to $SB$. This proves the estimate $\mu\geq n-1$ for $SB$.

We will now prove that the index $\mu^0$ cannot exceed $n-1$.  This
will verify that $\mu=n-1$ and that the critical point is
non-degenerate.

Suppose that there is an $n$-dimensional subspace $L$ in the tangent
space to $SB$ at $(P(\psi_n),\psi_n)$, where the Hessian of $Q$ is
non-positive. Since each fiber of $SB$ consists of functions from the
space $H^1_0(\Omega)$, there is a tautological mapping $(P,f)\mapsto
f$ from $SB$ into $H^1_0(\Omega)$ (in fact, into the set of functions
of unit $L_2$-norm). If the Fr\'{e}chet derivative of the tautological
mapping has zero kernel, then the subspace $L$ will produce an
$n$-dimensional subspace of $H^1_0(\Omega)$ transversal to $\psi_n$,
where the Hessian of $Q$ at $\psi_n$ is non-positive, which is a
contradiction. So, let us show that the kernel of the Fr\'{e}chet
derivative is zero. Due to the local trivial structure of $SB$, one
sees that the image of any tangent vector under the Fr\'{e}chet
derivative has the form
$$
g=E(f)(x)\left(M\psi_n\right)+v(x),
$$
where the function $f\in\cF$ is responsible for the infinitesimal
variation of the nodal set of $\psi_n$, $E$ is the previously
introduced extension operator from the nodal set to $\Omega$, and a
function $v\in H^1_P(\Omega)$ corresponds to the infinitesimal
variation in the fiber direction (i.e., the pair $(f,v)$ describes a
tangent vector to $SB$). Suppose now that $g=0$. In particular,
$g|_{\cZ}=0$. Taking into the account that $v|_{\cZ}=0$ (which is true
for any function from the space $H^1_P$), we conclude that
$$
E(f)(x)\left(M\psi_n\right)|_{\cZ}=f(x)M(x)\cdot\nabla \psi_n(x)|_{\cZ}=0.
$$
Since $\nabla \psi_n(x)|_{\cZ}$ is nonzero and orthogonal to $\cZ$, and
$M$ is transversal to $\cZ$, we see that
$$
M(x)\cdot\nabla \psi_n(x)|_{\cZ}\neq 0.
$$
This implies that $f(x)$ (defined on $\cZ$ only) vanishes
identically. Due to the linearity of the extension operator, the
extension $E(f)$ vanishes everywhere in $\Omega$. Hence, $g=v$. Since
$g=0$, we conclude that $v=0$ and thus the Fr\'{e}chet derivative of
the tautological mapping is injective.  This finishes the proof of the
lemma.
\end{proof}

\begin{lemma}
  \label{L:SC}
  The point $(P(\psi_n),\psi_n)$ in $SC$ is a non-degenerate critical
  point of $Q$ of Morse index $n-1$.
\end{lemma}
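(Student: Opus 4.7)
The plan is to deduce Lemma~\ref{L:SC} from Lemma~\ref{L:indn-1} by viewing $SC$ as a sub-manifold of $SB$ and applying Theorem~\ref{T: reduction} with $Y$ identified with $T_{(P(\psi_n),\psi_n)}SC$ and $Y'$ identified with the complementary directions inside $T_{(P(\psi_n),\psi_n)}SB$. Since $(P(\psi_n),\psi_n)$ is already a non-degenerate critical point of $Q$ on $SB$ of Morse index $n-1$, once we verify that $SC$ is locally a locus of minima of $Q$ in the $Y'$-direction, Theorem~\ref{T: reduction} will immediately deliver the same Morse index $n-1$ and non-degeneracy for $Q|_{SC}$.

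To set up the decomposition I would use the $L^2$-orthogonal splitting $H^1_P = C_P \oplus V'(P)$, where $C_P := \mathrm{span}\{\psi_1(P_j)\}$ is the fiber of $C$ over $P$ and $V'(P) := C_P^\perp \cap H^1_P$. For $P$ near $P(\psi_n)$, $C_P$ coincides with the span of the first $\nu$ eigenfunctions of the direct-sum operator $\bigoplus_j H(P_j)$ on $H^1_P$, and the strict inequality $\lambda_1(P_j) < \lambda_2(P_j)$ (simplicity of each ground state, which holds because each $P_j$ is connected) yields a persistent spectral gap $\max_j \lambda_1(P_j) < \min_j \lambda_2(P_j)$. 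Standard perturbation theory of isolated spectral subspaces then produces a smooth $P$-dependent projector onto $C_P$, so $V'(P)$ assembles into a smooth sub-bundle of $B$ complementary to $C$. Combined with the trivializations of Lemmas~\ref{L:bundleB} and \ref{L:bundleC}, this gives a smooth chart on $SB$ in which $SC$ is the slice ``$y' = 0$''.

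The core verification is that every nearby $(P,f) \in SC$ is a local minimum of $Q$ on $SB$ in the $V'(P)$-direction. Writing $f = \sum_j c_j \psi_1(P_j)$ with $\sum_j c_j^2 = 1$ and $w \in V'(P)$, the $L^2$-orthogonality of $w$ to each $\psi_1(P_j)$ gives both $\langle f, w \rangle = 0$ and $\langle H f, w \rangle = \sum_j c_j \lambda_1(P_j) \langle \psi_1(P_j), w \rangle = 0$. Hence $Q[f+w] = Q[f] + Q[w]$ and $\|f+w\|^2 = 1 + \|w\|^2$, so the expansion of the Rayleigh quotient along the $w$-direction has vanishing linear term and quadratic term proportional to $Q[w] - Q[f]\,\|w\|^2$. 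By the min-max principle this form is strictly positive definite on $V'(P)$, because $Q[w] \ge \min_j \lambda_2(P_j)\,\|w\|^2$ while $Q[f] \le \max_j \lambda_1(P_j)$, and the spectral gap is preserved in a neighborhood of the base point. Thus $SC$ is a locus of strict local minima of $Q$ in the $V'$-direction, and Theorem~\ref{T: reduction} produces both non-degeneracy and Morse index equal to $n-1$.

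The main obstacle is not the Hessian computation, which is essentially routine once the framework is in place, but rather the smooth construction of the complementary sub-bundle $V'$ and of a chart on $SB$ realizing the Banach-space direct-sum structure required by Theorem~\ref{T: reduction}. One must ensure that the $P$-dependent splitting $H^1_P = C_P \oplus V'(P)$ trivializes smoothly and is compatible with the pullback identifications already used in Lemmas~\ref{L:bundleB} and \ref{L:bundleC}, so that the theorem genuinely applies on the infinite-dimensional manifold $SB$ and not merely on its tangent space.
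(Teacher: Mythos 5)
Your proof is correct and shares the paper's skeleton---both deduce Lemma~\ref{L:SC} from Lemma~\ref{L:indn-1} by exhibiting $SC$ as a locus of fiberwise minima of $Q$ over a foliation of $SB$ transversal to $SC$, and then invoking Theorem~\ref{T: reduction}---but your transversal foliation is genuinely different. The paper takes the leaves to be level sets of the submersion $\Upsilon(P,f)=(P,\|f\|_{L_2(P_1)},\dots,\|f\|_{L_2(P_\nu)})$, i.e.\ it freezes the $L^2$ mass of $f$ on each sub-domain and minimizes $Q$ sub-domain by sub-domain using only the variational characterization of $\lambda_1(P_j)$; for smoothness of $\Upsilon$ it must observe that the coefficients $c_j$ stay bounded away from zero near $\psi_n$. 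You instead split each fiber $L^2$-orthogonally as $C_P\oplus V'(P)$ and verify positivity of the transverse Hessian via the uniform spectral gap $\max_j\lambda_1(P_j)<\min_j\lambda_2(P_j)$, which indeed persists near $P(\psi_n)$ since each ground state is simple. Your route buys a quantitative lower bound on the Hessian in the $V'$-direction (strict minima with modulus controlled by the gap) and does not require non-vanishing of the $c_j$; its price is the extra, routine but real, work of showing that $V'(P)$ assembles into a smooth complementary sub-bundle and of building the chart that realizes the product situation of Theorem~\ref{T: reduction}. You correctly identify this as the main technical burden, and it is discharged by the smooth frame from Lemma~\ref{L:bundleC} together with the smoothly invertible Gram matrix of that frame, exactly as you indicate. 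Both arguments are sound.
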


\begin{proof}
  We know that the eigenfunction $\psi_n (x)$ has the form
  $$
  \psi_n (x)=\sum_{j=1}^\nu c_j\psi_1(D_j)(x),
  $$
  where $\{D_j\}$ is the nodal partition of $\psi_n$ and each ground
  state $\psi_1(D_j)$ is extended by zero to the whole domain
  $\Omega$, $\sum c_j^2=1$, and none of the coefficients $c_j$ are
  equal to zero.  Thus the coefficients $c_j$ remain bounded away from
  zero in a neighborhood $U$ of $\psi_n$ in $SC$.

  As we have already seen, $C$ is a smooth $\nu$-dimensional
  sub-bundle of $B$ and thus $SC$ is a smooth sub-manifold of $SB$. We
  will now introduce near the point $(P(\psi_n),\psi_n)$ a smooth
  foliation of $SB$ by manifolds transversal to $SC$, such that $SC$
  will be the locus of minima of $Q$ over the leaves of this
  foliation. Then, after a smooth local change of coordinates, we will
  be in the situation of Theorem \ref{T: reduction}. Thus, the Morse
  index of $Q$ at $(P(\psi_n),\psi_n)$ on $SC$ will be equal to the
  one on $SB$, which will prove the first claim of the lemma.

  So, let us finish the proof by constructing such a
  foliation. Consider the following mapping $\Upsilon$ from a
  neighborhood $U$ of $(P(\psi_n),\psi_n)\in SB$ to $\cP_\rho\times
  S^{\nu-1}$, where $S^{\nu-1}$ is the unit sphere in $\R^\nu$:
  $$
  \Upsilon(P,f)=(P, \|f\|_{L_2(P_1)},\dots,\|f\|_{L_2(P_\nu)}).
  $$
  Notice that none of the components $\|f\|_{L_2(P_j)}$ of the vector
  $\Upsilon(P,f)$ vanishes (since this is the case for
  $\Upsilon(P(\psi_n),\psi_n)$).

  By the arguments provided before, this is a smooth mapping. It is
  also clear that it is a submersion.

  Let $w=(P, \sum c_j \psi_1(P_j))$ be a point in $U\cap SC$ near
  $(P(\psi_n),\psi_n)$. Consider the leaf
  $L_w=\Upsilon^{-1}(P,c_1,c_2,\dots,c_\nu)$. Due to the submersion
  property of $\Upsilon$, the leaves $L_w$ form near
  $(P(\psi_n),\psi_n)$ a smooth fibration. Since the differential of
  $\Upsilon$ on $SC$ is surjective, this foliation is transversal to
  $SC$.  Moreover, the groundstate $\psi_1(P_j)$ is the location of the
  minimum of the corresponding Rayleigh quotient, and therefore
  \begin{equation*}
    Q[f] = \sum_{j=1}^\nu Q\left[f|_{P_j}\right]
    \geq \sum_{j=1}^\nu Q\left[ c_j \psi(P_j)\right],
    \qquad c_j = \|f\|_{L_2(P_j)}.
  \end{equation*}
  That is, the minimal value of $Q$ on the leaf $L_w$ is attained
  exactly when $f_j=c_j\psi_1(P_j)$ for all $j$, i.e. on $SC$.
\end{proof}

\begin{lemma}\label{L:pullback}
  The point $(P(\psi_n),\psi_n)$ is a (degenerate) critical point of
  $Q$ in $SC_E$ with the Morse index not less than $n-\nu$ and the
  $\mu^0$-index not more than $n-1$.

  The restriction of the quadratic form $Q$ to the fibers of $SC_E$ is
  the pull-back of the functional $\Lambda$ from the base $\cE_\rho$.
  In other words, if $\pi:SC_E \mapsto \cE_\rho$ is the bundle
  projection, then for any $x\in SC_E$,
  $$
  Q(x)=\Lambda(\pi(x)).
  $$
\end{lemma}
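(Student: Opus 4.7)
The plan is to handle the two assertions of the lemma separately, starting with the pullback identity, which is the most concrete computation and clarifies the geometric picture. A point in $SC_E$ has the form $(P, f)$ with $P\in\cE_\rho$ and $f = \sum_j c_j \psi_1(P_j)$; because the functions $\psi_1(P_j)$ have pairwise disjoint supports and unit $L^2$ norm, the constraint $\|f\|_{L^2}=1$ reduces to $\sum_j c_j^2 = 1$. Using Green's identity and the eigenvalue equation on each sub-domain, a direct computation then yields
\begin{equation*}
  Q[f] = \sum_j c_j^2 \, Q[\psi_1(P_j)] = \sum_j c_j^2 \, \lambda_1(P_j) = \Lambda(P)\sum_j c_j^2 = \Lambda(P),
\end{equation*}
where the next-to-last equality uses the defining equipartition property of $\cE_\rho$. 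This establishes $Q = \Lambda \circ \pi$ on $SC_E$.

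For the index estimates, I would leverage Lemma~\ref{L:SC}, which asserts that $(P(\psi_n),\psi_n)$ is a non-degenerate critical point of $Q$ on $SC$ of Morse index $n-1$. First I would note that $(P(\psi_n),\psi_n) \in SC_E$: by Proposition~\ref{P:ground_states} the nodal partition is an equipartition, so $P(\psi_n)\in\cE_\rho$, and the expansion $\psi_n = \sum_j c_j \psi_1(P(\psi_n)_j)$ places $\psi_n$ in the corresponding fiber of $C$. Since $\cE_\rho$ has codimension $\nu-1$ in $\cP_\rho$ by Proposition~\ref{P:manifold}, and $SC$ is a locally trivial bundle over $\cP_\rho$ by Lemma~\ref{L:bundleC}, the subset $SC_E$ is a smooth submanifold of $SC$ of the same codimension $\nu-1$. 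Criticality of $Q|_{SC_E}$ at $(P(\psi_n),\psi_n)$ is then immediate, since $dQ$ already annihilates the larger tangent space $T_{(P(\psi_n),\psi_n)} SC$ and therefore its subspace $T_{(P(\psi_n),\psi_n)} SC_E$.

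For the Morse index lower bound, I would take an $(n-1)$-dimensional subspace $V \subset T_{(P(\psi_n),\psi_n)} SC$ on which the Hessian of $Q$ is negative definite (provided by Lemma~\ref{L:SC}) and intersect it with the codimension-$(\nu-1)$ subspace $T_{(P(\psi_n),\psi_n)} SC_E$. The intersection has dimension at least $(n-1)-(\nu-1)=n-\nu$, and the Hessian remains negative definite on it, yielding Morse index $\geq n-\nu$. For the $\mu^0$-index upper bound, any subspace of $T_{(P(\psi_n),\psi_n)} SC_E$ on which the Hessian is non-positive is simultaneously such a subspace of $T_{(P(\psi_n),\psi_n)} SC$, hence has dimension at most $n-1$ (the $\mu^0$-index of $Q$ on $SC$, which equals its Morse index by the non-degeneracy asserted in Lemma~\ref{L:SC}).

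I do not expect a serious obstacle; the argument is essentially bookkeeping combining the pullback identity with a codimension count. The only points requiring modest care are confirming that $SC_E$ is genuinely a smooth submanifold of $SC$ (a routine bundle-restriction observation using Proposition~\ref{P:manifold} and Lemma~\ref{L:bundleC}) and keeping the tangent spaces straight so as to cleanly apply the linear-algebra fact that a negative-definite subspace survives intersection with a codimension-$k$ subspace with loss of at most $k$ dimensions.
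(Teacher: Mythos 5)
Your proposal is correct and follows essentially the same route as the paper: the pull-back identity is verified by the same direct computation using disjoint supports and the equipartition property, and the index bounds are obtained exactly as in the paper by restricting from $SC$ to the codimension-$(\nu-1)$ submanifold $SC_E$ and invoking Lemma~\ref{L:SC}. Your version merely spells out the linear-algebra bookkeeping (intersection with a codimension-$(\nu-1)$ subspace loses at most $\nu-1$ negative directions) that the paper states in one line.
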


\begin{proof}
  According to Proposition \ref{P:manifold}, $SC_E$ is a smooth
  sub-manifold of $SC$ of co-dimension $\nu-1$. Hence, upon
  restriction to $SC_E$, an index (either Morse or $\mu^0$) cannot
  decrease by more than $\nu-1$.  A restriction also can not increase
  an index.  These observations, together with Lemma~\ref{L:SC}, prove
  the first two claims.

  Let $f=\sum_jc_j\psi_1(P_j)$ be an element of the fiber of $SC_E$
  over a partition $P=\{P_j\}$. Then
  $$
  Q(f)=\sum |c_j|^2\lambda_1(P_j).
  $$
  Since $P$ is an equipartition, all the values $\lambda_1(P_j)$ are
  equal to the same value $\Lambda(P)$. Taking into account that
  $\sum|c_j|^2=\|f\|^2_{L_2(\Omega)}=1$, we get
  $$
  Q(f)=\sum |c_j|^2\lambda_1(P_j)=\Lambda(P)\left(\sum|c_j|^2\right)=\Lambda(P),
  $$
  which proves the last statement of the lemma.
\end{proof}

%%%%%%%%%%%%%%%%%%%%%%%%%%%%%%%%%%%%%%%%%%%%%%%%%%%
\subsection{Index of $\Lambda$ and nodal deficiency}

We are now ready to prove Theorem \ref{T:morse} by obtaining
estimates for the Morse index $\mu$ from two sides.

\begin{proof}[Proof of the estimate from below: $\mu\geq n-\nu_{\psi_n}$]
  From Lemma~\ref{L:pullback} we see that the pull-back of $\Lambda$
  to $SC_E$ has Morse index at least $n-\nu$ at the point
  $(P(\psi_n),\psi_n)$, i.e. there is a tangent subspace of dimension
  at least $(n-\nu)$, where the Hessian of $Q$ is strictly negative
  definite.  On the other hand, the pullback is constant along the
  fibers of $\pi:SC_E \mapsto \cE_\rho$.  Therefore the above subspace
  must correspond to such a subspace for the Hessian of $\Lambda$.
\end{proof}

\begin{proof}[Proof of the estimate from above: $\mu\leq \mu^0 \leq
  n-\nu_{\psi_n}$]
  We use Lemma \ref{L:pullback} again.  It shows that the index
  $\mu^0$ of $\Lambda$ at $P(\psi_n)$ cannot exceed
  $(n-1)-(\nu-1)=n-\nu$.  Indeed, the pullback of $\Lambda$ to $SC_E$
  add a $(\nu-1)$-dimensional subspace on which the Hessian is zero.
  But the $\mu^0$-index on $SC_E$ is bounded by $n-1$.
\end{proof}

To summarize, together with the trivial inequality $\mu\leq \mu^0$
(see the discussion prior to Theorem~\ref{T:morse}) we have shown that
\begin{equation*}
  n-\nu_{\psi_n} \leq \mu \leq \mu^0 \leq n-\nu_{\psi_n}.
\end{equation*}
This shows that the critical point is non-degenerate
(i.e. $\mu=\mu^0$) and of Morse index $\mu = n-\nu_{\psi_n}$,
finishing the proof of Theorem \ref{T:morse}.

%%%%%%%%%%%%%%%%%%%%%%%%%%5
\section{Final remarks and conclusions}\label{S:remarks}
%%%%%%%%%%%%%%%%%%%%%%%%%%%%%%%%%

\begin{enumerate}
\item The results of this paper (Theorems \ref{T:critical} and
  \ref{T:morse}) translate without any changes in their proofs to the
  case when $\Omega$ is a compact smooth Riemannian manifold with or
  without boundary.
\item Smoothness conditions imposed on the domain, potential, and
  partition interfaces, can certainly be weakened. In this text, we
  have decided not to do so, in order not to complicate considerations
  unnecessarily.
\item The sets $\cP$ and $\cE$ that we considered involved only
  generic partitions, which allows perturbing the boundaries of
  sub-domains independently, simplifying the structure of the manifold
  of partitions and the consequent considerations of the text. In
  general, however, the smooth pieces of partition manifolds are
  joined into singular ``varieties'' $\cP$ and $\cE$, where the
  junctions occur when the partition interfaces start meeting each
  other. It would be interesting to see whether one could prove an
  analog of Theorems \ref{T:critical} and \ref{T:morse} for such
  non-generic partitions. The authors believe that something of this
  nature can be done.
\item An interesting connection between the zeros of eigenfunctions
  and stability index with respect to another suitably defined
  perturbation has emerged recently.  In the preprint
  \cite{Ber_prep11} (see also \cite{CdV_prep12} for an alternative
  proof), it was shown for the graph case that the ``surplus'' number
  of zeros $\phi_n - (n-1)$ is equal to the Morse index of the
  eigenvalue considered as a function of magnetic perturbation of the
  Hamiltonian.  Here $\phi_n$ is the number of zeros of the $n$-th
  eigenfunction.
  \item The reader probably has noticed that there is a significant
  flexibility in our choice of the functional (e.g., Sobolev) spaces. Probably,
  one could achieve the same results working in the $C^k$ spaces of smooth
  functions. Our choice of the Sobolev scale was due to a known simpler
  interpretation of the Morse theory on Hilbert, rather than Banach, manifolds.
\end{enumerate}

%%%%%%%%%%%%%%%%%%%%%%%%%%55
\section*{Acknowledgments}\label{S:thanks}
%%%%%%%%%%%%%%%%%%%%%%%%%%%%%%%%%
The authors thank R.~Band, Y.~Colin~de~Verdiere, L.~Friedlander, B.~Helffer, H.~Hezari,
T.~Hoffmann-Ostenhof, H.~Raz, P.~Sarnak, and S.~Zelditch for useful
discussions and references. We also are grateful to the reviewer for many important remarks.
The work of G.B. was partially supported
by the National Science Foundation grant DMS-0907968.  The work of
P.K. was partially supported by MSRI and IAMCS.  The work of
U.S. was partially supported by The Wales Institute of Mathematical
and Computational Sciences (WIMCS), EPSRC (grant EP/G021287), and ISF
(grant 166/09).

%%%%%%%%%%%%%%%%%%%%%
%\bibliographystyle{ieeetr}
%%%%%%%%%%%%%%%%%%%%%%%%%
%\bibliography{nodal}
%%%%%%%%%%%%%%%%%%%%%%%%%%%

%\end{document}

\end{document}